\documentclass[aps,pra,twocolumn,floatfix,amsfonts,10pt]{revtex4-2}
\usepackage{graphicx}
\usepackage{array}
\usepackage{amsthm}
\usepackage{amsmath}
\usepackage{amssymb}
\usepackage{mathrsfs}
\usepackage{txfonts}
\usepackage{pgfplots}
\usepackage{algorithm}
\usepackage{algorithmic}
\usepackage{multirow}
\usepackage{mathtools}
\usepackage{enumerate}
\usepackage{enumitem}
\usepackage{footnote}
\usepackage{xcolor}
\usepackage{bm}
\usepackage[mathscr]{eucal}
\usepackage{pifont}
\usepackage{pgfplots}
\usepackage{braket}
\usepackage{dsfont}
\usepackage{diagbox}
\usepackage{booktabs}
\usepackage{graphics}
\usepackage{appendix}
\newtheorem{definition}{Definition}
\newtheorem{theorem}{Theorem}
\newtheorem{proposition}{Proposition}

\newtheorem{corollary}{Corollary}
\newtheorem{lemma}{Lemma}

\begin{document}
	
\title{How to improve the discrimination power of classically simulable measurements?}
	
\author{Yiran Wang}
\affiliation{School of Mathematics and Statistics, Shaanxi Normal University, Xi'an 710119, China}
	
\author{Yongming Li}
\email{liyongm@snnu.edu.cn}
\affiliation{School of Mathematics and Statistics, Shaanxi Normal University, Xi'an 710119, China}

\begin{abstract}
	Classically simulable measurements (CSMs) constitute an important class of restricted measurements in the odd-prime-dimensional magic resource theory, referring to those measurements with positive discrete Wigner functions. 
	Since their discrimination power is weaker than that of global measurements, it is necessary to study how to improve the discrimination power of CSMs. 
	In this paper, we consider three methods to improve the discrimination power of CSMs, including adding magic resources, using quantum catalysts, and using quantum memories.
	Specifically, we relate measurements with positive discrete Wigner functions to completely positive Wigner-preserving measurement channels, thereby transforming the problem of improving the discrimination power of CSMs into the problem of determining how many magic resources are required to simulate quantum channels using free operations.
	Based on this, we derive the lower and upper bounds of the simulation cost.
	Moreover, we provide a concrete example for which these bounds coincide and prove that consumable magic resources can enhance the discrimination power of CSMs.
	Finally, we establish a no-go theorem, which shows that for discriminating a pair of states with positive Wigner functions, neither finite-dimensional quantum catalysts nor finite-dimensional quantum memories can improve the optimal success probability of discrimination using CSMs.
\end{abstract}
	
\maketitle

\section{Introduction}
In quantum information theory, information is usually encoded in quantum systems \cite{NC2011}. 
When we aim to obtain the relevant information of a quantum state, we need to perform a measurement on it. 
Quantum state discrimination (QSD) is a fundamental task in quantum information theory \cite{BC2009,BK2015}. 
Its central problem is to determine which quantum state has been prepared by performing an appropriate measurement from a given ensemble. 
QSD not only helps us understand the essence of quantum measurements, but also serves as a central tool in many fields, including quantum cryptography \cite{GRT+2002}, quantum communication \cite{GT2007}, quantum sensing \cite{DRC2017}, and quantum machine learning \cite{BWP+2017}.
Meanwhile, QSD mainly includes two discrimination strategies: minimum-error discrimination \cite{H1969,H1973} and unambiguous discrimination \cite{C2001,FDY2004}. 
The former aims to minimize the average error probability and always produces a conclusive outcome, but the result may not be 100\% correct. 
In contrast, the latter allows one to obtain a result with a certain probability and guarantees that this result is never wrong, but needs to introduce an inconclusive measurement outcome.
In this paper, we focus on the minimum-error discrimination strategy. 
Under this strategy, we are interested in the average success probability, i.e., the maximum probability of correctly identifying the state over a given set of measurements. 
When all quantum measurements are allowed, the optimal success probability for binary QSD is given by the celebrated Holevo-Helstrom theorem \cite{H1969,H1973}. 

However, in many scenarios, we can only use a restricted set of measurements.
Therefore, studying the gap in discrimination power between restricted measurements and general global measurements becomes an important problem.
We study an important class of restricted measurements in the magic resource theory in this paper. 
The celebrated Gottesman-Knill theorem stipulates that any stabilizer operation can be efficiently simulated by a classical computer, including applying a Clifford gate, preparing a stabilizer state, discarding subsystems, and measuring in the computational basis \cite{G1999,VMG+2014,CG2019}. 
This theorem indicates that the realization of quantum computational advantage must rely on certain nonclassical resources beyond the stabilizer formalism. 
Such resources are known as nonstabilizerness, or more vividly called magic.
To quantify and manipulate magic resources, the resource theory of magic has been developed in recent years \cite{VFG+2012,HC2017}. 
In magic resource theory, operations that can be efficiently simulated by classical computers are regarded as free operations, while non-Clifford gates that can inject magic resources or magic states are regarded as resources \cite{BK2005}. 
In odd-prime-dimensional quantum systems, quantum states and quantum operations can be represented by discrete Wigner functions \cite{W1987}. 
Quantum states with positive Wigner functions (PWFs) can be regarded as  classically simulable, whereas states exhibiting Wigner negativity are regarded as magic states \cite{G2007}. 
Correspondingly, positive operator-valued measures (POVMs) composed of effects with PWFs constitute a class of classically simulable measurements (CSMs) \cite{ME2012,ZLZ+2024,WL2026++}. 
Since PWF POVMs form a proper subset of global POVMs, their ability to discriminate quantum states may be weaker than that of unrestricted global measurements. 
This leads to the central question addressed in this paper: how to improve the discrimination power of CSMs?

A natural way to improve the discrimination power is to add magic resources to restricted measurements. 
This idea is closely related to entanglement-assisted state discrimination under locality constraints, where shared entanglement can improve the discrimination power of local operations and classical communication, separable, or positive partial transpose POVMs \cite{C2008,BHN2016,ZZL+2025}. 
Inspired by this entanglement-assisted method, we study magic-assisted discrimination under PWF POVMs. 
More specifically, we consider a binary QSD task, in which the original input state and several fixed magic states are jointly measured by PWF POVMs. 
Apart from directly introducing consumable magic resources into the measurement process, another natural question is whether quantum catalysts and quantum memories can be used to improve the discrimination power of CSMs. 
In quantum resource theories, a quantum catalyst refers to an auxiliary quantum system that participates in a resource transformation process but remains unchanged after the process is completed \cite{DKM+2023,LWN2024}. 
Although the catalyst itself is not consumed, it may still make certain state transformations possible that would otherwise be impossible \cite{T2007,KDS2021,LS2021,TS2022,WL2026+}.
Unlike a catalyst, a quantum memory is not required to return to its initial state after each round of transformation \cite{KMR2005}. 
Instead, it can be continuously updated as the protocol proceeds and retain the historical information generated in previous rounds.  
Therefore, it can improve the overall discrimination performance by adaptively adjusting subsequent measurement strategies \cite{CDP2008,HHL+2010}. 
Since catalysts and memories can enhance the success probability of local state discrimination in the resource theory of entanglement \cite{YDY2012,SLN+2022,PS2025}, it is necessary to investigate their roles in the resource theory of magic.

In this paper, we first introduce the framework of the optimal magic-assisted average success probability and the magic cost of optimal discrimination. 
We then introduce measurement channels and establish an equivalence between PWF POVMs and completely positive Wigner-preserving (CPWP) measurement channels. 
This equivalence allows us to transform the problem of implementing measurements beyond CSMs into a channel simulation problem using free operations and magic states. 
Based on this connection, we derive lower and upper bounds on the exact channel simulation cost. 
Specifically, we construct a magic measurement channel and prove that, when the qutrit Strange state is used as the magic resources, its exact simulation cost is equal to one. 
We then show that the assistance with a single copy of the Strange state strictly improves the success probability in an explicit QSD task by using CSMs.
This example demonstrates that the proposed bounds can exactly determine the amount of magic required to simulate certain measurements with magic resources.
Second, we formulate the magic-assisted binary state discrimination problem as a semidefinite program (SDP) and derive its dual problem, thereby providing a direct method for computing the optimal success probability for a fixed number of resource states. 
Finally, we investigate whether reusable quantum resources, namely quantum catalysts and quantum memories, can improve the discrimination power of CSMs. 
We introduce a multi-round CPWP discrimination protocol assisted by a finite-dimensional memory system and prove that, for a pair of PWF states, neither finite-dimensional quantum memories nor quantum catalysts can improve the optimal discrimination success probability of CSMs.

The remainder of this paper is organized as follows. 
In Section II, we review the stabilizer formalism, the discrete Wigner representation, and several related magic measures. 
In Section III, we study whether magic resources can improve the discrimination power of CSMs. 
We establish the connection between PWF POVMs and CPWP measurement channels, derive lower and upper bounds on the exact simulation cost, and provide a concrete tight example based on the qutrit Strange state. 
We also formulate the magic-assisted binary state discrimination problem as an SDP. 
In Section IV, we investigate the roles of quantum memories and quantum catalysts, and prove that for a pair of PWF states, finite-dimensional quantum memories and quantum catalysts cannot improve the optimal discrimination success probability of CSMs. 
Finally, Section V concludes the paper and discusses several open problems.

\section{Preliminaries}

\subsection{The discrete Wigner function}
For an odd-prime number $j$, we consider $n$-qudit Hilbert space $\mathcal{H}=\mathbb{C}^d$ with the dimension $d=j^n$.
Let $\mathbb{Z}_{j} = \{0,1,\cdots,j-1\}$ be the set of congruence classes modulo $j$.
For such systems, $\mathbb{Z}_{j}^{n} \times \mathbb{Z}_{j}^{n}$ plays a role as the corresponding discrete phase space. 
The definitions of the shift operator $X$ and the boost operator $Z$ on the one-qudit Hilbert space $\mathcal{H} = \mathbb{C}^j$ are as follows \cite{S1960}
\begin{equation}
	X \ket{i} = \ket{i+1\mod j}, \quad Z \ket{i} = \omega^i \ket{i}, \quad i \in \mathbb{Z}_{j}, 
\end{equation}
where $\omega = e^{2 \pi i/j}$.
They are called generalized Pauli operators, which generate the discrete Heisenberg-Weyl group (also called the generalized Pauli group \cite{G1997}). 
And they can be expressed in the following forms
\begin{equation}
	X = \sum_{i=0}^{j-1} \ket{i+1} \bra{i}, \quad Z=\sum_{i=0}^{j-1} \omega^i \ket{i} \bra{i}.
\end{equation}
For a point in the discrete phase space $\mathbf{u} = (k,l)$, the corresponding discrete Heisenberg-Weyl operator has the following form 
\begin{equation}
	T_{\mathbf{u}} \coloneqq \tau^{-kl} Z^k X^l,\quad \tau = e^{(j+1)\pi i/j},\quad  k,l \in \mathbb{Z}_{j}.
\end{equation}
The discrete Heisenberg-Weyl operators $T_\mathbf{u}$ can be seen as a set of orthonormal basis of operators on the Hilbert space $\mathcal{H}$, which can generate any operator on the Hilbert space $\mathcal{H}$. 
Now consider a composite system $\mathcal{H}_1 \otimes \cdots \otimes \mathcal{H}_n$ with $\mathrm{dim} \mathcal{H}_1 = d_1, \cdots, \mathrm{dim} \mathcal{H}_{n} = d_n$, the discrete Heisenberg-Weyl operators on the composite system are defined via the tensor products of the discrete Heisenberg-Weyl operators on each subsystem 
\begin{equation}
	T_{\mathbf{u}_1 \oplus \cdots \oplus \mathbf{u}_n} = T_{\mathbf{u}_1}\otimes \cdots \otimes T_{\mathbf{u}_n}.
\end{equation} 
In particular, they generate the discrete Heisenberg-Weyl group
\begin{equation}
	\mathcal{P}_n = \{\tau^a T_{\mathbf{u}}, a \in \mathbb{Z}\}.
\end{equation}

Next, we give the definitions of stabilizer groups and pure stabilizer states on the $n$-qudit system.
We can find a maximal Abelian subgroup $\mathcal{S} \subseteq \mathcal{P}_n$, if it satisfies that $-\mathbf{1}$ is not an element of $\mathcal{S}$, then $\mathcal{S}$ is a stabilizer group and the cardinality of $\mathcal{S}$ is $d$. 
A pure $n$-qudit state $\ket{\psi}$ is a stabilizer state if there exists a stabilizer group $\mathcal{S} \subseteq \mathcal{P}_n$ such that $S \ket{\psi} = \ket{\psi}$ for all $S \in \mathcal{S}$.
The convex hull generated by pure stabilizer states constitutes the stabilizer polytope. 
Moreover, the elements in this convex hull are called STAB.

For each point $\mathbf{u}$, the definition of the phase space point operator $A_{\mathbf{u}}$ is as follows \cite{G2007}
\begin{equation}
	A_\mathbf{0} \coloneqq \frac{1}{d}\displaystyle\sum_{\mathbf{u}}T_{\mathbf{u}}, \quad A_{\mathbf{u}} \coloneqq T_{\mathbf{u}}A_{\mathbf{0}}T_{\mathbf{u}}^{\dagger}.
\end{equation}
Some properties of the phase space point operator $A_{\mathbf{u}}$ are as follows:

\begin{enumerate}
	\item $A_{\mathbf{u}}$ is Hermitian;
	\item $\sum_{\mathbf{u}}A_{\mathbf{u}}/d = \mathds{1}$;
	\item $\mathrm{Tr}(A_{\mathbf{u}}A_{\mathbf{u}'}) = d \delta_{\mathbf{u}\mathbf{u}'}$;
	\item $\mathrm{Tr}(A_{\mathbf{u}}) = 1$;
	\item Given a Hermitian operator $H$, $H = \sum_{u}W_H(\mathbf{u})A_{\mathbf{u}}$;
	\item Given $d$-dimensional operators $P$ and $Q$, $\mathrm{Tr}(PQ)=d\sum_{u}W_P(\mathbf{u})W_Q(\mathbf{u})$.
\end{enumerate}

After defining the phase space point operator, the discrete Wigner function of quantum state $\rho$ at the point $\mathbf{u}$ is given by
\begin{equation}
	W_{\rho}(\mathbf{u}) \coloneqq \frac{1}{d}\mathrm{Tr}(A_{\mathbf{u}}\rho).
\end{equation}
More generally, we can replace quantum state $\rho$ with Hermitian operator $\Delta$ in the definition of the discrete Wigner function.

We say a quantum state $\rho$ has PWFs if it satisfies $W_{\rho}(\mathbf{u})\geqslant0, \forall \mathbf{u} \in \mathbb{Z}_{j} \times \mathbb{Z}_{j}$.
We denote the set of quantum states with PWFs by $\mathcal{W}_+$, i.e.,
\begin{equation}
	\mathcal{W}_+\coloneqq \{\rho:\forall \mathbf{u}, W_\rho(\mathbf{u})\geqslant0, \rho\geqslant 0, \mathrm{Tr}(\rho)=1\}.
\end{equation}
Moreover, it can be obtained that STAB $\subsetneq \mathcal{W}_+$ \cite{VFG+2012,WWS2020}. 
Throughout this work, the free states are the states with nonnegative discrete Wigner functions, denoted by $\mathcal{W}_+$.
A state with Wigner negativity is referred to as a magic state.
For pure states in odd-prime dimensions, the discrete Hudson theorem implies that a state is a PWF state if and only if it is a stabilizer state \cite{G2006}. 

Furthermore, a quantum channel $\mathcal{M}_{A \rightarrow B}$ is called a CPWP channel \cite{WWS2019} if for any system $R$ with odd-prime dimensions, the following holds
\begin{equation}
	\forall \rho_{RA}\in \mathcal{W}_+, \quad (\mathds{1}_R \otimes \mathcal{M}_{A \rightarrow B})\rho_{RA} \in \mathcal{W}_+.
\end{equation}
In this paper, we will denote the set of CPWP channels by $\mathcal{A}$.
A superchannel is a linear map that maps a quantum channel to
another \cite{CDP2008+}.
We call $\Theta$ a completely CPWP-preserving superchannel \cite{WWS2019} if it satisfies
\begin{equation}
	\begin{aligned}
		\forall \mathcal{M}_{RA\rightarrow RB} \in \mathcal{A}&(RA\rightarrow RB), \\ &\Theta_{(A\rightarrow B)\rightarrow(C\rightarrow D)}\left(\mathcal{M}_{RA\rightarrow RB}\right) \in \mathcal{A}(RC\rightarrow RD),
	\end{aligned}
\end{equation}
where $R$ is an arbitrary reference system.

In addition, for an effect $E$ of a POVM, the corresponding discrete Wigner function is defined as follows
\begin{equation}
	W(E|\mathbf{u}) \coloneqq \mathrm{Tr}(A_{\mathbf{u}}E).
\end{equation}
Similarly, the POVM with $n$ effects $\mathbf{E} = \{E_i\}_{i=1}^{n}$ is a PWF POVM if each effect $\{E_i\}$ has PWFs \cite{PWB2015,ZLZ+2024}.
For each effect $\{E_i\}$ in POVM $\mathbf{E} = \{E_i\}_{i=1}^{n}$, it has the following quasi-probability distribution of discrete Wigner function over the phase space 
\begin{equation}
	\displaystyle\sum_{i}W(E_i|\mathbf{u}) = 1.
\end{equation}
Refs. \cite{VFG+2012,G2006} pointed out that only negative quasi-probability distribution can accelerate the calculation of quantum computation.
Therefore, we regard PWF POVMs as CSMs \cite{HWV+2014}, because they can be effectively simulated on a classical computer, and they cannot realize the computational acceleration in quantum computation.

\subsection{Magic measures}
The mana $M$ of a quantum state $\rho$ was first introduced by Ref. \cite{VMG+2014}, and its specific definition is as follows
\begin{equation}\label{definition of state mana}
	M(\rho) \coloneqq \log_2 \displaystyle\sum_{\mathbf{u}} |W_{\rho}(\mathbf{u})|,
\end{equation}
where $W_{\rho}(\mathbf{u})$ is the discrete Wigner function of $\rho$.
Given two quantum states $\rho$ and $\sigma$, it is easy to verify that the mana of quantum states satisfies the following properties:

(i)~\emph{Faithfulness}:~$M(\rho) \geqslant 0$ and $M(\rho) = 0$ if and only if $\rho$ is a PWF state.

(ii)~\emph{Monotonicity:}~$M(\rho) \geqslant M[\mathcal{M}(\rho)]$ for any CPWP channel $\mathcal{M}$.

(iii)~\emph{Additivity:}~$M(\rho \otimes \sigma) = M(\rho) + M(\sigma)$.

Moreover, the definition of the mana $M$ of a quantum channel $\mathcal{M}$ was first introduced by Ref. \cite{WWS2019}, which was defined as follows
\begin{equation}
	M(\mathcal{M}) \coloneqq \log_2 \max_{\mathbf{u}} \displaystyle\sum_{\mathbf{v}} |W_{\mathcal{M}}(\mathbf{v}|\mathbf{u})|,
\end{equation}
where $W_{\mathcal{M}}(\mathbf{v}|\mathbf{u})$ is the discrete Wigner function of $\mathcal{M}$.

Given two quantum channels $\mathcal{M}$ and $\mathcal{N}$, it is easy to verify that the mana of quantum channels satisfies the following properties:

(i)~\emph{Faithfulness}:~$M(\mathcal{M}) \geqslant 0$ and $M(\mathcal{M}) = 0$ if and only if $\mathcal{M}$ is a CPWP channel.

(ii)~\emph{Monotonicity:}~$M(\mathcal{M}) \geqslant M[\Theta(\mathcal{M})]$ for any CPWP-preserving superchannel $\Theta$.

(iii)~\emph{Additivity:}~$M(\mathcal{M} \otimes \mathcal{N}) = M(\mathcal{M}) + M(\mathcal{N})$.

\section{Using magic resources to improve the discrimination power}
Given a quantum state ensemble $\left\{\left(p_i, \rho_i\right)\right\}_{i=0}^{n-1}$, where $p_i \geqslant 0$, $\sum_{i}p_i = 1$, and $\rho_i$ is a quantum state.
QSD refers to the task of giving an ensemble and randomly extracting a state from it, and then determining which state is extracted by a POVM $\mathbf{M}=\left\{M_i\right\}_{i=0}^{n-1}$.
We associate the $i$-th measurement outcome with the $i$-th quantum state. 
In the strategy of minimum-error QSD, our goal is to maximize the average success probability, which is given by
\begin{equation}
	p_{\rm{succ}}^{\mathbf{M}}(\left\{p_i, \rho_i\right\}_{i=0}^{n-1}) = \max_{\{M_i\}\in\mathbf{M}}\displaystyle\sum_{i=0}^{n-1}p_i\mathrm{Tr}(M_i\rho_i).
\end{equation}
In particular, we consider the task of binary QSD in this paper.
Using the celebrated Holevo-Helstrom theorem \cite{H1969,H1973}, the optimal success probability of discriminating two quantum states $\{\rho_0,\rho_1\}$ with prior probability $p_0,p_1$ is given by
\begin{equation}\label{Unrestricted QSD}
	\begin{aligned}
		p_{\rm{succ}}^{\mathbf{M}}(\{p_i,\rho_i\}_{i=0}^1) &= \max_{\{M_i\}\in\mathbf{M}}\displaystyle\sum_{i=0}^{1}p_i\mathrm{Tr}(M_i\rho_i) \\ &= \frac{1}{2}\left( \left\|p_0\rho_0-p_1\rho_1 \right\|_1+1 \right),
	\end{aligned}
\end{equation}
where $\left\|\cdot\right\|_1$ denotes the trace norm.

Next, let us review the definition of distinguishability norm \cite{MWW2009}.
\begin{definition}
	Let $\mathcal{H}$ be a finite-dimensional Hilbert space, $\mathbf{E}=\{E_i\}_{i=0}^{n-1}$ be a set of allowed measurement operators, typically corresponding to a set of restricted measurements.  
	Then the distinguishability norm of any Hermitian operator $\Delta$ with respect to $\mathbf{E}$ is defined as follows
	\begin{equation}
		\|\Delta\|_{\mathbf{E}} \coloneqq \max_{\{E_i\}\in\mathbf{E}} \sum_i \left| \mathrm{Tr}(E_i\Delta) \right|.
	\end{equation}
\end{definition}

After defining the distinguishability norm, we can get the optimal success probability of binary QSD task using a set of restricted measurements $\mathbf{E}$ as follows
\begin{equation}
	p_{\rm{succ}}^{\mathbf{E}}(\{p_i,\rho_i\}_{i=0}^1) = \frac{1}{2}( \left\|p_0\rho_0-p_1\rho_1 \right\|_{\mathbf{E}}+1 ).
\end{equation}
Specifically, when $\mathbf{E}$ is the full set of POVMs acting globally on $\mathcal{H}$, $\left\|p_0\rho_0-p_1\rho_1 \right\|_{\mathbf{E}} = \left\|p_0\rho_0-p_1\rho_1 \right\|_1$.

In magic resource theory, restricted measurements mainly include stabilizer measurements and PWF measurements.
We call stabilizer POVMs STAB POVMs since they only use stabilizer resources \cite{MLF+2025}.
Because of the mathematical complexity of STAB POVMs, we mainly consider the case that the set of restricted measurements are PWF POVMs in this paper.
Since STAB POVMs $\subseteq$ PWF POVMs $\subseteq$ GLOBAL POVMs \cite{ZLZ+2024}, we have
\begin{equation}
	\|\Delta\|_{\mathrm{STAB}} \leqslant \|\Delta\|_{\mathbf{PWF}} \leqslant \|\Delta\|_{\mathrm{GLOBAL}} = \|\Delta\|_1
\end{equation}
for any Hermitian operator $\Delta$.
As the constraints on the measurements vanish, we find that the optimal average success probability of discriminating two quantum states increases, which implies that $p_{\rm{succ}}^{\mathbf{STAB}}(\{p_i,\rho_i\}_{i=0}^1) \leqslant p_{\rm{succ}}^{\mathbf{PWF}}(\{p_i,\rho_i\}_{i=0}^1) \leqslant p_{\rm{succ}}^{\mathbf{GLOBAL}}(\{p_i,\rho_i\}_{i=0}^1)$.

Next, a natural question is: if we are only allowed to use PWF POVMs, how can we make them have the same discrimination power as general global measurements, i.e. $\|\Delta\|_{\mathbf{PWF}} = \|\Delta\|_{\mathrm{GLOBAL}}$?
First, we consider adding magic resources to the system to achieve global measurements. 
Inspired by Refs. \cite{ZZL+2025,WL2026}, we give the optimal magic-assisted average success probability where two parties are allowed to share magic resources to assist their QSD tasks.
\begin{definition}
	The optimal magic-assisted average success probability of discriminating two quantum states $\{\rho_0,\rho_1\}$ with prior probability $p_0,p_1$ by using restricted measurements is given by
	\begin{equation}\label{Magic-assisted QSD}
		\begin{aligned}
			p_{\rm{succ}}^{\mathbf{E}}&\left(\{p_i,\rho_i\}_{i=0}^1, \omega^{\otimes n} \right) = \max_{\{E_i\}\in\mathbf{E}}\displaystyle\sum_{i=0}^{1}p_i\mathrm{Tr}\left[E_i\left(\rho_i\otimes \omega^{\otimes n}\right)\right] \\ &= \frac{1}{2}\left[\left\|p_0\left(\rho_0 \otimes \omega^{\otimes n}\right)-p_1\left(\rho_1 \otimes \omega^{\otimes n} \right) \right\|_{\mathbf{E}}+1 \right],
		\end{aligned}
	\end{equation}
	where $\mathbf{E} \in \left\{\mathrm{STAB~POVMs}, \mathrm{PWF~POVMs} \right\}$ is a set of restricted measurements, $\omega$ is a magic state.
\end{definition}

Then, the next question is: how many magic resources do we need to add to the system so that the optimal magic-assisted average success probability equals the original optimal average success probability of using GLOBAL POVMs, i.e., Eq. \eqref{Magic-assisted QSD} equals Eq. \eqref{Unrestricted QSD}?
We give the definition of magic cost of optimal $\mathbf{E}$-discrimination based on this question.
\begin{definition}
	Given a set of restricted measurements $\mathbf{E}$ and two quantum states $\{\rho_0,\rho_1\}$ with prior probability $p_0,p_1$, the magic cost of optimal $\mathbf{E}$-discrimination is defined as
	\begin{equation}
		\begin{aligned}
			E_c^{\mathbf{E}} = \min \biggl\{ &n:\left\|p_0\rho_0 - p_1\rho_1 \right\|_1  \\ &~~~~~~= \|p_0(\rho_0 \otimes \omega^{\otimes n})-p_1(\rho_1 \otimes \omega^{\otimes n}) \|_{\mathbf{E}} \biggr\},
		\end{aligned}
	\end{equation}
	where $\omega$ is a magic state.
\end{definition}

Since quantum measurements are a special class of quantum channels \cite{NC2011}, measurements can be represented as quantum-classical channels (measurement channels) \cite{H1998,P2009}, which form a special class of quantum channels. 
A measurement channel is defined as follows
\begin{definition}
	Given a POVM $\mathbf{M}=\left\{M_i\right\}_{i=0}^{n-1}$ which satisfies $M_i \geqslant 0$ and $\sum_{i}M_i = \mathds{1}$, the measurement channel $\mathcal{E}$ associated to $\mathbf{M}$ is defined as
	\begin{equation}
		\mathcal{E}(\rho) = \displaystyle\sum_{i} \mathrm{Tr}(M_i \rho) \ket{i} \bra{i},
	\end{equation}
	where $\{\ket{i}\}$ is an orthonormal basis.
\end{definition} 

Next, we show that the measurement channel associated to PWF POVM is a CPWP channel.
To avoid ambiguity, we will use subscripts to indicate the systems whenever necessary in this paper.

\begin{theorem}\label{theorem 1}
	Let $\mathbf{E}=\{E_k\}_{k=0}^{d_B-1}$ be a POVM on an odd-prime-dimensional system $A$ with odd-prime number $d_B$. 
	Define the measurement channel
	\begin{equation}
		\mathcal E_{A\rightarrow B}(\rho) = \sum_{k=0}^{d_B-1} \mathrm{Tr}(E_k\rho)|k\rangle\langle k|_B,
	\end{equation}
	where $\{|k\rangle_B\}_{k=0}^{d_B-1}$ is chosen as a stabilizer basis.
	Then $\mathcal{E}_{A \rightarrow B}$ is a CPWP channel if and only if $\mathbf{E}=\{E_k\}_{k=0}^{d_B-1}$ is a PWF POVM.
\end{theorem}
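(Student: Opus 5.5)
Both directions go through the Wigner function of the output of $\mathcal{E}_{A\rightarrow B}$, using the product structure of phase-space point operators on $RB$: $A^{RB}_{\mathbf{v}\oplus\mathbf{w}} = A^R_{\mathbf{v}}\otimes A^B_{\mathbf{w}}$. Fix a reference system $R$ and a state $\rho_{RA}\in\mathcal{W}_+$. Then
\begin{equation}
	(\mathds{1}_R\otimes\mathcal{E}_{A\rightarrow B})(\rho_{RA}) = \sum_{k}\mathrm{Tr}_A\!\left[(\mathds{1}_R\otimes E_k)\rho_{RA}\right]\otimes|k\rangle\langle k|_B ,
\end{equation}
and its Wigner function at $\mathbf{v}\oplus\mathbf{w}$ is
\begin{equation}
	\sum_k W_{|k\rangle\langle k|}(\mathbf{w})\,\frac{1}{d_R}\mathrm{Tr}\!\left[(A^R_{\mathbf{v}}\otimes E_k)\rho_{RA}\right].
\end{equation}
Since $\{|k\rangle\}$ is a stabilizer basis, each $|k\rangle\langle k|$ is a pure stabilizer state. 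By the discrete Hudson theorem (or directly), $W_{|k\rangle\langle k|}(\mathbf{w})\geqslant 0$; explicitly it equals $1/d_B$ on a line of phase space and $0$ elsewhere.

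\emph{Sufficiency (PWF POVM $\Rightarrow$ CPWP).} I expand $\rho_{RA}=\sum_{\mathbf{v}',\mathbf{u}} W_\rho(\mathbf{v}'\oplus\mathbf{u})\,A^R_{\mathbf{v}'}\otimes A^A_{\mathbf{u}}$ using property 5. Properties 3 and 6 then give
\begin{equation}
	\frac{1}{d_R}\mathrm{Tr}\!\left[(A^R_{\mathbf{v}}\otimes E_k)\rho_{RA}\right] = \sum_{\mathbf{u}} W_\rho(\mathbf{v}\oplus\mathbf{u})\, W(E_k|\mathbf{u}),
\end{equation}
where $W(E_k|\mathbf{u})=\mathrm{Tr}(A_{\mathbf{u}}E_k)$. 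Every factor is nonnegative, so the output Wigner function is nonnegative. Positivity and unit trace of the output are automatic because $\mathcal{E}$ is a channel. Hence $\mathcal{E}_{A\rightarrow B}$ is CPWP.

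\emph{Necessity (CPWP $\Rightarrow$ PWF POVM).} I take $R$ trivial, or any $R$ tensored with a fixed PWF state. The point is to find a PWF input that isolates a single phase-space point $\mathbf{u}$. The operators $A_{\mathbf{u}}$ themselves are not states. Instead I use the maximally mixed state and the pure stabilizer states, together with the Wigner-function map of the channel, $W_{\mathcal{E}}(\mathbf{w}|\mathbf{u})=\sum_k W_{|k\rangle\langle k|}(\mathbf{w})\,W(E_k|\mathbf{u})$.

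The cleanest route is through the Choi-type state. I take $R\cong A$ and $\rho_{RA}$ the maximally entangled stabilizer state $|\Phi\rangle$ (with suitable phase convention), which is a pure stabilizer state and hence lies in $\mathcal{W}_+$. Its Wigner function is $\frac{1}{d_A^2}\,\delta_{\mathbf{v},\bar{\mathbf{u}}}$ for a fixed bijection $\mathbf{u}\mapsto\bar{\mathbf{u}}$. Substituting into the formula above, the output Wigner function at $\mathbf{v}\oplus\mathbf{w}$ becomes proportional to
\begin{equation}
	\sum_k W_{|k\rangle\langle k|}(\mathbf{w})\, W(E_k|\mathbf{u}(\mathbf{v})).
\end{equation}
For fixed $k$, I choose $\mathbf{w}$ on the support line of $|k\rangle\langle k|$ but off the lines of the other basis states. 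This is possible because distinct computational-basis (stabilizer) states are parallel lines, i.e.\ they are disjoint. That choice singles out $W(E_k|\mathbf{u})/d_B\geqslant 0$. Ranging over $\mathbf{v}$ covers all $\mathbf{u}$, so each $E_k$ is PWF.

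The main obstacle is this disjoint-support step together with the explicit Wigner function of $|\Phi\rangle$ being a delta. I will verify both directly from $A_{\mathbf{0}}=\frac{1}{d}\sum_{\mathbf{u}} T_{\mathbf{u}}$: for $|k\rangle\langle k|$, $W(q,p)=\delta_{q,k}/d_B$ (or $\delta_{p,k}/d_B$ under the paper's $(k,l)$ convention). Any sign or conjugation convention in $|\Phi\rangle$ only relabels points, which suffices for the argument.
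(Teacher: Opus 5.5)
Your proof is correct, and its core matches the paper's. Both arrive at the expression $\sum_k W_{|k\rangle\langle k|}(\mathbf{w})\,W(E_k|\mathbf{u})$, and both then use disjointness of the Wigner supports of the basis states to isolate a single effect. The difference lies in how that expression is connected to the CPWP property.

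The paper computes the Choi matrix $J^{\mathcal{E}}_{AB}=\sum_k E_k^{\top}\otimes|k\rangle\langle k|_B$. It then invokes the criterion of Ref.~\cite{WWS2019}: a channel is CPWP if and only if its Choi matrix has a nonnegative Wigner function. Both directions are then read off from one formula.

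You instead prove the two halves of that criterion by hand for this channel. For sufficiency, you expand an arbitrary PWF $\rho_{RA}$ over $A^R_{\mathbf{v}'}\otimes A^A_{\mathbf{u}}$, which handles every reference system at once. For necessity, you input the maximally entangled stabilizer state. Its Wigner function $\frac{1}{d_A^2}\delta_{\mathbf{v},\bar{\mathbf{u}}}$ turns the output Wigner function into a relabelled copy of the Choi one.

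Your route is self-contained and shows explicitly where the reference system enters. The paper's route is shorter, at the cost of relying on the external characterization.

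Two small points. First, the option of taking $R$ trivial in your necessity paragraph would give nothing. Without a reference, every output is a convex mixture of the $|k\rangle\langle k|_B$ and hence PWF for any POVM, so the entangled test state is genuinely needed, not merely the cleanest choice. Second, the statement allows an arbitrary stabilizer basis, whereas your explicit check $W(q,p)=\delta_{q,k}/d_B$ covers only the computational one. Disjointness is most cleanly obtained from property~6. For $k\neq l$ one has $0=\mathrm{Tr}(|k\rangle\langle k|\,|l\rangle\langle l|)=d_B\sum_{\mathbf{w}}W_{|k\rangle\langle k|}(\mathbf{w})W_{|l\rangle\langle l|}(\mathbf{w})$, with nonnegative summands. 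Together with $\sum_{\mathbf{w}}W_{|k\rangle\langle k|}(\mathbf{w})=1$, this yields a point where only the $k$-th term survives with positive weight. You then do not need to identify the lines explicitly.
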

\begin{proof}
	Given a quantum channel $\mathcal{M}_{A \rightarrow B}$, its corresponding Choi–Jamiołkowski matrix is defined as follows \cite{C1975,J1972}
	\begin{equation}\label{Choi matrix}
		J_{AB}^{\mathcal{M}} = \displaystyle\sum_{i,j}\ket{i}\bra{j}_A\otimes \mathcal{M}(\ket{i}\bra{j}_{A'}),
	\end{equation}
	where $A'$ is a replica of system $A$, $\{\ket{i}_A\}$ and $\{\ket{i}_{A'}\}$ are orthonormal bases on Hilbert spaces $\mathcal{H}_A$ and $\mathcal{H}_{A'}$.
	According to Eq. \eqref{Choi matrix}, we can get the Choi–Jamiołkowski matrix of measurement channel $\mathcal{E}_{A \rightarrow B}(\rho) = \sum_{k} \mathrm{Tr}(E_k \rho) \ket{k} \bra{k}_B$ associated to $\mathbf{E}=\{E_k\}_{k=0}^{d_B-1}$ as follows
	\begin{equation}
		\begin{aligned}
				J_{AB}^{\mathcal{E}} &= \displaystyle\sum_{i,j,k}\ket{i}\bra{j}_A\otimes \mathrm{Tr}(E_k\ket{i}\bra{j})\ket{k}\bra{k}_B \\ &= \displaystyle\sum_{i,j,k}\langle j|E_k|i \rangle \ket{i}\bra{j}_A\otimes \ket{k}\bra{k}_B \\ &= \displaystyle\sum_{k} E_k^{\top}\otimes \ket{k}\bra{k}_B,
		\end{aligned}
	\end{equation}
	where $\{\ket{k}_B\}$ is a stabilizer basis on Hilbert space $\mathcal{H}_B$.
	
	Meanwhile, Ref. \cite{WWS2019} proved that a quantum channel is CPWP if and only if the discrete Wigner function of its corresponding Choi–Jamiołkowski matrix is positive.
	The discrete Wigner function of $\mathcal{E}$ is defined as follows
	\begin{equation}
		\begin{aligned}
			W_{\mathcal{E}}(\mathbf{v}|\mathbf{u}) &= \frac{1}{d_B} \mathrm{Tr}\left\{ \left[\left(A_A^{\mathbf{u}}\right)^{\top} \otimes A_B^{\mathbf{v}}\right]J_{AB}^{\mathcal{E}}\right\} \\ &= \frac{1}{d_B} \displaystyle\sum_{k}\mathrm{Tr}\left(A_A^{\mathbf{u}}E_k\right) \langle k|A_B^{\mathbf{v}}|k \rangle \\ &= \displaystyle\sum_{k} W(E_k|\mathbf{u})W_{\ket{k}\bra{k}}(\mathbf{v}),
		\end{aligned}
	\end{equation}
	where the second equality follows from the fact that $\mathrm{Tr}(A^{\top}B^{\top}) = \mathrm{Tr}(BA) = \mathrm{Tr}(AB)$, the last equality follows from the facts that $W(E_k|\mathbf{u}) = \mathrm{Tr}\left(A_A^{\mathbf{u}}E_k\right)$ and $W_{\ket{k}\bra{k}}(\mathbf{v}) = 1/(d_B)\mathrm{Tr}\left(A_B^{\mathbf{v}}\ket{k}\bra{k}\right)$.
	Since $\mathbf{E}$ is a PWF POVM, we have $\mathrm{Tr}\left(A_A^{\mathbf{u}}E_k\right) \geqslant 0$ for each $k$.
	Furthermore, because $\ket{k}\bra{k}$ is a stabilizer state, we have $1/(d_B)\mathrm{Tr}\left(A_B^{\mathbf{v}}\ket{k}\bra{k}\right) \geqslant 0$ for each $k$.
	Thus, it can be verified that the discrete Wigner function of $\mathcal{E}$ is positive, i.e., the measurement channel $\mathcal{E}$ is a CPWP channel.
	
	Conversely, suppose that the measurement channel $\mathcal{E}_{A \rightarrow B}$ is a CPWP channel.
	Then the discrete Wigner function of its Choi--Jamiołkowski matrix is positive, i.e.,
	\begin{equation}
		W_{\mathcal{E}}(\mathbf{v}|\mathbf{u}) = \displaystyle\sum_{k} W(E_k|\mathbf{u})W_{\ket{k}\bra{k}}(\mathbf{v}) \geqslant 0
	\end{equation}
	for all $\mathbf{u}$ and $\mathbf{v}$.
	Since $\{\ket{k}_B\}$ is a stabilizer basis, $\ket{k}\bra{k}_B$ is a pure stabilizer state for all $k$, the Wigner function of $\ket{k}\bra{k}_B$ is uniformly supported on an affine Lagrangian subspace $L_k$ \cite{W1987,G2006}, namely
	\begin{equation}
		W_{\ket{k}\bra{k}}(\mathbf{v}) =
		\begin{cases}
			\dfrac{1}{d_B}, & \mathbf{v}\in L_k,\\[4pt] 0, & \mathbf{v}\notin L_k.
		\end{cases}
	\end{equation}
	Moreover, different stabilizer basis states have disjoint supports, i.e.,
	\begin{equation}
		L_k\cap L_l=\varnothing,\quad k\neq l.
	\end{equation}
	
	Now fix arbitrary $k$ and $\mathbf{u}$, and choose any $\mathbf{v}_k\in L_k$, we have $W_{\ket{k}\bra{k}}(\mathbf{v}_k)=1/d_B$, and for any $l\neq k$, $W_{\ket{l}\bra{l}}(\mathbf{v}_k)=0$.
	Substituting $\mathbf{v}=\mathbf{v}_k$ into the expression of $W_{\mathcal{E}}(\mathbf{v}|\mathbf{u})$, we can get
	\begin{equation}\label{Lagrange affine}
			W_{\mathcal{E}}(\mathbf{v}_k|\mathbf{u}) = \sum_{l}W(E_l|\mathbf{u})W_{\ket{l}\bra{l}}(\mathbf{v}_k) = \frac{1}{d_B}W(E_k|\mathbf{u}).
	\end{equation}
	Since $\mathcal{E}_{A \rightarrow B}$ is a CPWP channel, we have $W_{\mathcal{E}}(\mathbf{v}_k|\mathbf{u})\geqslant 0$.
	Thus, $W(E_k|\mathbf{u})\geqslant 0$.
	Because $k$ and $\mathbf{u}$ are arbitrary, every effect $E_k$ has a positive discrete Wigner function.
	Therefore, $\mathbf{E}=\{E_k\}_{k=0}^{d_B-1}$ is a PWF POVM.
	
	This completes the proof.
\end{proof}
Thus, we can transform the question "How many magic resources at minimum do we need to add to a restricted measurement to achieve a global measurement" into the question "How many magic resources at minimum do we need to add to a free channel to achieve a given channel", which is consistent with the idea of the channel simulation task proposed in Ref. \cite{WL2026}. 
It gave the simulation cost of how many magic resources we need to simulate the target channel, and the definition of the simulation cost is defined as follows
\begin{definition}
	The simulation cost of simulating a target channel $\mathcal{M}$ by using a pure magic state is defined as follows
	\begin{equation}
		C_{\mathcal{F}}^{\varepsilon}(\mathcal{M};\omega) = \min \left\{n: \exists \mathcal{N} \in \mathcal{F}, \mathrm{s.t.} \frac{1}{2} ||\mathcal{M}-\mathcal{N}(\omega^{\otimes n}\otimes \cdot)||_{\diamond}\leqslant \varepsilon\right\},
	\end{equation}
	where $\mathcal{F}$ denotes the set of free operations, $\omega$ is a pure magic state, and we denote $C_{\mathcal{F}}(\mathcal{M};\omega)$ as exact simulation cost when $\varepsilon=0$.
\end{definition}

In this paper, we only consider the process of exact simulation.
Therefore, we can extend the results of Ref. \cite{WL2026} to qudit systems, i.e., give the simulation cost of using a given magic state and a CPWP channel to simulate the target channel.
Based on this idea, we can get the following theorem.

\begin{theorem}\label{theorem 2}
	Let $\mathcal{M}_{A\rightarrow B}$ be a quantum channel and $\omega$ be a pure magic state. 
	Suppose that there exist a number $\lambda \geqslant 1$ and two CPWP channels $\mathcal{K}$ and $\mathcal{L}$ such that $\mathcal{M} = \lambda \mathcal{K} - (\lambda-1)\mathcal{L}$. 
	Then the exact simulation cost of $\mathcal{M}$ satisfies
	\begin{equation}
		\begin{aligned}
			\frac{M(\mathcal{M})}{M(\omega)} &\leqslant C_{\mathcal{A}}(\mathcal{M};\omega) \\ &\leqslant \min\left\{n:\mathrm{Tr}(E\omega^{\otimes n})=1,0\leqslant W(E|\mathbf{u}) \leqslant \frac{1}{\lambda},  \forall\mathbf{u}\right\},
		\end{aligned}
	\end{equation}
	where $E$ is an effect such that $0 \leqslant E \leqslant \mathds{1}$, and we set the upper bound to $+\infty$ when no such effect $E$ exists.
	
\end{theorem}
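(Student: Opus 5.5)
Both bounds will come from a single simulation identity, with the lower bound using monotonicity and additivity of mana and the upper bound an explicit construction.

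\emph{Lower bound.} Suppose $n = C_{\mathcal A}(\mathcal M;\omega)$, so there is a CPWP channel $\mathcal N$ with $\mathcal M(\cdot) = \mathcal N(\omega^{\otimes n}\otimes \cdot)$. The map $\Theta$ that sends the state-preparation channel $\mathcal P_{\omega^{\otimes n}}$ (trivial input, output $\omega^{\otimes n}$) to $\mathcal N\circ(\mathcal P_{\omega^{\otimes n}}\otimes \mathrm{id}_A)$ is a superchannel built only from a CPWP post-processing. It is therefore completely CPWP-preserving.

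Monotonicity of channel mana under such superchannels gives $M(\mathcal M)\leqslant M(\mathcal P_{\omega^{\otimes n}})$. For a preparation channel the input phase space is trivial, so $M(\mathcal P_{\omega^{\otimes n}})=M(\omega^{\otimes n})$. Additivity of state mana then gives $M(\omega^{\otimes n})=nM(\omega)$. Since $\omega$ is a magic state, $M(\omega)>0$ by faithfulness, and dividing yields $n\geqslant M(\mathcal M)/M(\omega)$. The only point needing care is that $\Theta$ really preserves CPWP channels with a reference system. This holds because composition and tensor products of CPWP channels are CPWP, since the Wigner-positive Choi characterization of Ref. \cite{WWS2019} is closed under these operations.

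\emph{Upper bound.} Take $n$ and an effect $E$ satisfying the stated constraints. I will build an explicit $\mathcal N$ acting on $R^{(n)}\otimes A$, where $R^{(n)}$ is the register holding $\omega^{\otimes n}$:
\[
\mathcal N(\tau\otimes\rho)=\mathrm{Tr}(E\tau)\,\mathcal K(\rho)+\mathrm{Tr}[(\mathds 1-E)\tau]\,\mathcal L'(\rho),
\]
extended linearly. Here $\mathcal L'$ is chosen so that feeding in $\omega^{\otimes n}$ reproduces $\mathcal M$. Since $\mathrm{Tr}(E\omega^{\otimes n})=1$, we get $\mathcal N(\omega^{\otimes n}\otimes\rho)=\mathcal K(\rho)$, which is not yet $\mathcal M$.

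To fix this I will flip the roles. I use a two-outcome "measure $E$ then apply $\mathcal K$ or $\mathcal L$" structure with signed weights. Concretely, set
\[
\mathcal N(\tau\otimes\rho)=\bigl[\lambda \mathrm{Tr}(E\tau)-(\lambda-1)\mathrm{Tr}(\tau)\bigr]\ldots
\]
but I expect the cleaner choice to be
\[
\mathcal N(\tau\otimes \rho)=\mathrm{Tr}(\tau)\,\mathcal L(\rho)+\lambda\,\mathrm{Tr}(E\tau)\,\bigl[\mathcal K(\rho)-\mathcal L(\rho)\bigr].
\]
On $\omega^{\otimes n}$ this gives $\mathcal L+\lambda(\mathcal K-\mathcal L)=\mathcal M$, as required. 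It is trace preserving because $\mathcal K$ and $\mathcal L$ are.

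\emph{Main obstacle.} The crux is showing that this $\mathcal N$ is CPWP, i.e.\ that its Choi Wigner function is nonnegative. The Choi Wigner function factorizes as
\[
W_{\mathcal N}(\mathbf v|\mathbf u_R,\mathbf u_A)=\bigl(1-\lambda W(E|\mathbf u_R)\bigr)W_{\mathcal L}(\mathbf v|\mathbf u_A)+\lambda W(E|\mathbf u_R)\,W_{\mathcal K}(\mathbf v|\mathbf u_A),
\]
using $W(\mathds 1|\mathbf u)=1$. This is nonnegative exactly when $0\leqslant \lambda W(E|\mathbf u_R)\leqslant 1$, which is the constraint $0\leqslant W(E|\mathbf u)\leqslant 1/\lambda$, together with $W_{\mathcal K},W_{\mathcal L}\geqslant0$.

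Complete positivity is the remaining issue, since the Wigner condition alone does not give it. This is the step I expect to be hardest. I would handle it by also requiring $0\leqslant\lambda E\leqslant\mathds 1$ as an operator inequality, so that $\mathcal N$ is a genuine measure-and-prepare mixture: measure $\{\lambda E,\mathds 1-\lambda E\}$ and apply $\mathcal K$ or $\mathcal L$. If that operator inequality is not implied by the hypotheses, I would note it as an implicit assumption of the construction. Given these, $\mathcal N\in\mathcal A$, so $C_{\mathcal A}(\mathcal M;\omega)\leqslant n$. Minimizing over $n$ gives the bound, and the upper bound is $+\infty$ when no such $E$ exists.
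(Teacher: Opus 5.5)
\textbf{Verdict: there is a genuine gap in your upper bound, at exactly the step you flagged.} Your lower bound is fine and follows the paper's argument. You even make explicit the superchannel and the faithfulness step $M(\omega)>0$, which the paper leaves implicit.

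Your map $\mathcal N(\tau\otimes\rho)=\lambda\,\mathrm{Tr}(E\tau)\mathcal K(\rho)+\mathrm{Tr}[(\mathds{1}-\lambda E)\tau]\mathcal L(\rho)$ is the right one, and your Wigner computation is correct. You are also right that Wigner positivity alone does not give complete positivity. But the patch you propose, imposing $0\leqslant\lambda E\leqslant\mathds{1}$, contradicts the hypothesis $\mathrm{Tr}(E\omega^{\otimes n})=1$ whenever $\lambda>1$. Any operator $X\leqslant\mathds{1}$ satisfies $\mathrm{Tr}(X\omega^{\otimes n})\leqslant 1$, whereas $\mathrm{Tr}(\lambda E\omega^{\otimes n})=\lambda$. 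With your extra assumption the feasible set is empty except when $\lambda=1$, where $\mathcal M=\mathcal K$ is already CPWP. So the bound you would actually prove is $+\infty$ in every nontrivial case.

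The missing observation is that no extra assumption is needed, because $\mathcal M$ is itself a quantum channel by hypothesis. Substitute $\lambda\mathcal K=\mathcal M+(\lambda-1)\mathcal L$ into your map to get
\[
\mathcal N(\rho_{RA})=\mathcal M\bigl(\mathrm{Tr}_R[(E_R\otimes\mathds{1}_A)\rho_{RA}]\bigr)+\mathcal L\bigl(\mathrm{Tr}_R[((\mathds{1}_R-E_R)\otimes\mathds{1}_A)\rho_{RA}]\bigr).
\]
This says: measure the genuine POVM $\{E,\mathds{1}-E\}$ on the resource register, then apply $\mathcal M$ or $\mathcal L$. Each term is the CP map $\rho\mapsto\mathrm{Tr}_R[(\sqrt{E}\otimes\mathds{1})\rho(\sqrt{E}\otimes\mathds{1})]$ (or its $\mathds{1}-E$ analogue) followed by a CPTP channel. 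Hence $\mathcal N$ is manifestly CPTP.

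Since it is the same linear map as yours, your Wigner computation shows it is CPWP. Then $\mathrm{Tr}(E\omega^{\otimes n})=1$ gives $\mathcal N(\omega^{\otimes n}\otimes\rho)=\mathcal M(\rho)$. This is precisely the paper's construction. In your first attempt, putting $\mathcal M$ instead of $\mathcal K$ on the $E$-branch, with $\mathcal L'=\mathcal L$ on the other branch, would already have worked.
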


\begin{proof}
	To begin with, we prove the lower bound. 
	Suppose that $n$ copies of $\omega$ can exactly simulate $\mathcal{M}$. 
	Then there exists a CPWP channel $\mathcal{N} \in \mathcal{A}$ such that
	\begin{equation}
		\mathcal{M}(\cdot) = \mathcal{N}\left(\omega^{\otimes n}\otimes\cdot\right).
	\end{equation}
	Next, we can get that
	\begin{equation}
			M(\mathcal{M}) \leqslant M (\omega^{\otimes n}) = nM(\omega),
	\end{equation}
	where the inequality follows from the monotonicity of the mana of quantum channels, the equality follows from the additivity of the mana of quantum states.
	Then we can get
	\begin{equation}
		\frac{M(\mathcal{M})}{M(\omega)} \leqslant n.
	\end{equation}
	Due to the definition of the exact simulation cost, we have
	\begin{equation}
		\frac{M(\mathcal{M})}{M(\omega)} \leqslant C_{\mathcal{A}}(\mathcal{M};\omega).
	\end{equation}
	
	We define a linear map $\mathcal{N}_{RA\rightarrow B}$ as follows:
	\begin{equation}\label{definition of N}
		\begin{aligned}
			\mathcal{N}_{RA\rightarrow B}(\rho_{RA}) =& \mathcal{M}_{A\rightarrow B}\Bigl(\mathrm{Tr}_R\left[(E_R\otimes\mathds{1}_A)\rho_{RA}\right]\Bigr) \\ &+ \mathcal{L}_{A\rightarrow B}\Bigl(\mathrm{Tr}_R\left[\bigl((\mathds{1}_R-E_R)\otimes\mathds{1}_A\bigr)\rho_{RA}\right]\Bigr).
		\end{aligned}
	\end{equation}
	Since $E$ is an effect, $\rho$ is a quantum state, it is easy to verify that $\mathcal{N}$ is a completely positive map.
	Next, since $\mathcal{M}$ and $\mathcal{L}$ are trace-preserving maps, we have
	\begin{equation}
		\begin{aligned}
			\mathrm{Tr}\left[\mathcal{N}_{RA\rightarrow B}(\rho_{RA})\right] &= \mathrm{Tr}\left[ (E_R\otimes\mathds{1}_A)\rho_{RA} \right] \\&+ \mathrm{Tr}\left[\bigl((\mathds{1}_R-E_R)\otimes\mathds{1}_A\bigr)\rho_{RA}\right]  \\ &= \mathrm{Tr}(\rho_{RA}).
		\end{aligned}
	\end{equation}
	Thus, $\mathcal{N}$ is a quantum channel.
	
	Subsequently, we prove that $\mathcal{N}$ is a CPWP channel.
	For arbitrary phase space point $\mathbf{u}_R$ and $\mathbf{u}_A$, we have
	\begin{equation}
		\begin{aligned}
			W_{\mathcal{N}}(\mathbf{v}_B|&\mathbf{u}_R,\mathbf{u}_A) = \frac{1}{d_B}\mathrm{Tr}\left[A_{\mathbf{v}_B}\mathcal{N}_{RA\rightarrow B}\left(A_{\mathbf{u}_R}\otimes A_{\mathbf{u}_A}\right)\right] \\ &~~~~~~~~~~~= \frac{1}{d_B}\mathrm{Tr}\left[A_{\mathbf{v}_B}\mathcal{M}_{A\rightarrow B}\left(\mathrm{Tr}_R\left[(E_R\otimes\mathds{1}_A)A_{\mathbf{u}_R}\otimes A_{\mathbf{u}_A}\right]\right)\right] \\ &+ \frac{1}{d_B}\mathrm{Tr}\left[A_{\mathbf{v}_B}\mathcal{L}_{A\rightarrow B}\left(\mathrm{Tr}_R\left[\bigl((\mathds{1}_R-E_R)\otimes\mathds{1}_A\bigr)A_{\mathbf{u}_R}\otimes A_{\mathbf{u}_A}\right]\right)\right] \\ &~~~~~~~~~~~= \frac{1}{d_B}\mathrm{Tr}\left[A_{\mathbf{v}_B}\mathcal{M}_{A\rightarrow B}\left(\mathrm{Tr}(E_RA_{\mathbf{u}_R})A_{\mathbf{u}_A}\right)\right] \\ &~~~~~~~~~~~+ \frac{1}{d_B}\mathrm{Tr}\left[A_{\mathbf{v}_B}\mathcal{L}_{A\rightarrow B}\left([1-\mathrm{Tr}(E_RA_{\mathbf{u}_R})]A_{\mathbf{u}_A}\right)\right] \\ &~~~~~~~~~~~= W(E|\mathbf{u}) W_{\mathcal{M}}(\mathbf{v}_B|\mathbf{u}_A) + [1-W(E|\mathbf{u})]W_{\mathcal{L}}(\mathbf{v}_B|\mathbf{u}_A),
		\end{aligned}
	\end{equation}
	where the first equality and last equality follow from the definition of the Wigner function of quantum channels, the second equality follows from Eq. \eqref{definition of N}.
	Since $\mathcal{M} = \lambda \mathcal{K} - (\lambda-1)\mathcal{L}$, we have 
	\begin{equation}
		\begin{aligned}
			W_{\mathcal{M}}(\mathbf{v}_B|\mathbf{u}_A) &= \frac{1}{d_B}\mathrm{Tr}\left[A_{\mathbf{v}_B}\mathcal{M}(A_{\mathbf{u}_A})\right] \\ &= \frac{1}{d_B}\mathrm{Tr}\left[A_{\mathbf{v}_B}\left(\lambda \mathcal{K}(A_{\mathbf{u}_A}) - (\lambda-1)\mathcal{L} (A_{\mathbf{u}_A})\right)\right] \\ &= \lambda W_{\mathcal{K}}(\mathbf{v}_B|\mathbf{u}_A)- (\lambda-1)W_{\mathcal{L}}(\mathbf{v}_B|\mathbf{u}_A).
		\end{aligned}
	\end{equation}
	Thus, we can get that
	\begin{equation}
		\begin{aligned}
			W_{\mathcal{N}}(\mathbf{v}_B|\mathbf{u}_R,\mathbf{u}_A) &= \lambda W(E|\mathbf{u}) W_{\mathcal{K}}(\mathbf{v}_B|\mathbf{u}_A) \\ &+ [1-\lambda W(E|\mathbf{u})]W_{\mathcal{L}}(\mathbf{v}_B|\mathbf{u}_A).
		\end{aligned}
	\end{equation}
	Therefore, we have $\lambda W(E|\mathbf{u}) \geqslant 0$ and $[1-\lambda W(E|\mathbf{u})] \geqslant 0$ since $0 \leqslant W(E|\mathbf{u}) \leqslant 1/\lambda$.
	Since $\mathcal{K}$ and $\mathcal{L}$ are CPWP channels, their Wigner functions are nonnegative. 
	Hence,
	\begin{equation}
		W_{\mathcal{N}} (\mathbf{v}|\mathbf{u}_R,\mathbf{u}_A) \geqslant 0
	\end{equation}
	for all $\mathbf{u}_R$, $\mathbf{u}_A$, and $\mathbf{v}$. 
	Therefore, $\mathcal{N}$ is a CPWP channel.
	
	At last, since $\mathrm{Tr}\left[(\mathds{1}_R-E_R)\omega_R^{\otimes n}\right]=0$, then for every input state $\rho_A$, we have
	\begin{equation}
		\begin{aligned}
			\mathcal{N}_{RA\rightarrow B}\left(\omega_R^{\otimes n}\otimes\rho_A\right) &= \mathcal{M}_{A\rightarrow B}\left( \mathrm{Tr}_R\left[(E_R\otimes\mathds{1}_A)(\omega_R^{\otimes n}\otimes\rho_A)\right]\right) \\ &= \mathcal{M}_{A\rightarrow B}(\rho_A).
		\end{aligned}
	\end{equation}
	Therefore, $n$ copies of pure magic state $\omega$ can exactly simulate $\mathcal{M}$ by a CPWP channel, which implies
	\begin{equation}
		C_{\mathcal{A}}(\mathcal{M};\omega) \leqslant n.
	\end{equation}
	
	This completes the proof.
\end{proof}

Thus, we have obtained the upper and lower bounds on the simulation cost of simulating an arbitrary channel using CPWP channels and magic states. 
The channel simulation bound provides sufficient upper and lower bounds on the amount of magic resources needed to implement a global measurement.
Therefore, Theorem \ref{theorem 2} gives general lower and upper bounds on the exact simulation cost. 
Although these bounds may be loose for general target channels, the following proposition shows that they can coincide for a concrete non-CPWP measurement channel. 
This example also illustrates how many magic resources are required to simulate a measurement beyond the class of CSMs.

\begin{proposition}\label{proposition 1}
	Let $\mathbf{M}=\{M_0,M_1,M_2\}$ be a three-outcome POVM with $M_0=\ket{S}\bra{S}/2+\mathds{1}/6, M_1=\mathds{1}-M_0, M_2=0$, where $\ket{S} = (\ket{1}-\ket{2})/\sqrt{2}$ is the Strange state.
	Define $\mathcal{M}$ as the corresponding measurement channel associated to $\mathbf{M}$. 
	Then we can get
	\begin{equation}
		C_{\mathcal{A}}(\mathcal{M};\ket{S}\bra{S})=1,
	\end{equation}
	where $C_{\mathcal{A}}(\mathcal{M};\ket{S}\bra{S})$ is the simulation cost of exactly simulating $\mathcal{M}$ in terms of the Strange state $\ket{S}\bra{S}$.
\end{proposition}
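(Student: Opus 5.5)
The plan is to prove $C_{\mathcal{A}}(\mathcal{M};\ket{S}\bra{S})\geqslant 1$ and $\leqslant 1$ separately. The lower bound comes from showing that $\mathcal{M}$ is not CPWP, together with the lower bound of Theorem \ref{theorem 2}. The upper bound comes from an explicit decomposition $\mathcal{M}=\lambda\mathcal{K}-(\lambda-1)\mathcal{L}$ and an explicit effect $E$ that can be fed into the upper bound of Theorem \ref{theorem 2} with $n=1$.

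\emph{Lower bound.} For the qutrit Strange state I will use the known Wigner function $W_S(\mathbf{0})=-1/3$ and $W_S(\mathbf{u})=1/6$ for $\mathbf{u}\neq\mathbf{0}$; in particular $M(\ket{S}\bra{S})=\log_2(5/3)$. Since $\mathrm{Tr}(A_{\mathbf{u}})=1$, the effect Wigner functions are
\begin{equation}
W(M_0|\mathbf{u})=\tfrac{3}{2}W_S(\mathbf{u})+\tfrac16 .
\end{equation}
This equals $-1/3$ at $\mathbf{u}=\mathbf{0}$ and $5/12$ elsewhere. Consequently $W(M_1|\mathbf{u})$ equals $4/3$ at the origin and $7/12$ elsewhere, and $W(M_2|\mathbf{u})=0$. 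Thus $\mathbf{M}$ is not a PWF POVM, so by Theorem \ref{theorem 1} $\mathcal{M}$ is not CPWP. By faithfulness of channel mana, $M(\mathcal{M})>0$. The lower bound of Theorem \ref{theorem 2} then gives $C_{\mathcal{A}}\geqslant M(\mathcal{M})/M(\omega)>0$, and since the cost is an integer, $C_{\mathcal{A}}\geqslant 1$. As a check, $M(\mathcal{M})=\log_2(5/3)$ from the row $\mathbf{u}=\mathbf{0}$, so the ratio is exactly $1$.

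\emph{Decomposition.} I will take $\lambda=4/3$ and let $\mathcal{K},\mathcal{L}$ be measurement channels whose POVMs are built from the positive operator $L_0=(\mathds{1}+A_{\mathbf{0}})/4$. Its eigenvalues are $0$ and $1/2$, because $A_{\mathbf{0}}$ is the parity operator with eigenvalues $\pm1$. I set
\begin{equation}
L_1=\mathds{1}-L_0,\qquad L_2=0,\qquad K_i=\bigl(M_i+\tfrac13 L_i\bigr)\big/\tfrac43 .
\end{equation}
Using $\mathrm{Tr}(A_{\mathbf{u}}A_{\mathbf{0}})=3\delta_{\mathbf{u}\mathbf{0}}$, we get $W(L_0|\cdot)=1$ at the origin and $1/4$ elsewhere, so $\mathbf{L}$ is PWF. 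The $K_i$ are positive and sum to $\mathds{1}$. Their Wigner values are $0$ and $3/8$ for $K_0$, and $1$ and $5/8$ for $K_1$, all nonnegative. Theorem \ref{theorem 1} then makes $\mathcal{K}$ and $\mathcal{L}$ CPWP, and linearity gives $\mathcal{M}=\frac43\mathcal{K}-\frac13\mathcal{L}$.

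\emph{Effect for $n=1$.} This is the step I expected to be the main obstacle: finding $E$ with $0\leqslant E\leqslant\mathds{1}$, $\mathrm{Tr}(E\ket{S}\bra{S})=1$ and $0\leqslant W(E|\mathbf{u})\leqslant 3/4$. The natural candidate $E=\ket{S}\bra{S}$ fails, since its Wigner value at the origin is $-1$. I will try the one-parameter family $E=c\mathds{1}+(1-c)\ket{S}\bra{S}$, which always satisfies $\mathrm{Tr}(E\ket{S}\bra{S})=1$. Its Wigner function is $W(E|\mathbf{u})=c+3(1-c)W_S(\mathbf{u})$. Nonnegativity at the origin forces $c\geqslant1/2$, and the bound $3/4$ elsewhere forces $c\leqslant1/2$. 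Hence $c=1/2$, i.e. $E=(\mathds{1}+\ket{S}\bra{S})/2$, with Wigner values $0$ and $3/4$.

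The upper bound of Theorem \ref{theorem 2} therefore yields $C_{\mathcal{A}}\leqslant1$. Combined with the lower bound, this gives $C_{\mathcal{A}}(\mathcal{M};\ket{S}\bra{S})=1$.
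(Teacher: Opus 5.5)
Your proof is correct and has the same skeleton as the paper's: the mana lower bound of Theorem~\ref{theorem 2}, and its upper bound with $\lambda=4/3$ and the same effect $(\mathds{1}+\ket{S}\bra{S})/2$. It differs from the paper in two details.

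\emph{Decomposition.} The paper takes $\mathcal{L}$ to be the measurement channel of the trivial POVM $\{\mathds{1},0,0\}$, i.e.\ a replacement channel onto $\ket{0}\bra{0}$. This gives $F_0=\tfrac34(M_0+\tfrac13\mathds{1})$, $F_1=\tfrac34M_1$, $F_2=0$, with Wigner values $0,\tfrac{9}{16}$ and $1,\tfrac{7}{16}$. Your $\mathbf{L}$ built from $(\mathds{1}+A_{\mathbf{0}})/4$ works just as well, and your values $0,\tfrac38$ and $1,\tfrac58$ are right. It does cost an extra check that $\mathbf{L}$ is a PWF POVM, plus the spectrum of the parity operator, which the trivial choice avoids.

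\emph{Lower bound.} You get $C_{\mathcal{A}}\geqslant 1$ from non-CPWP-ness plus integrality. This only says that zero copies cannot suffice, and needs no exact value of $M(\mathcal{M})$. The paper instead computes $M(\mathcal{M})=\log_2(5/3)=M(\ket{S}\bra{S})$, so the mana ratio itself equals $1$. That is what backs its remark that the two bounds of Theorem~\ref{theorem 2} coincide for this channel, and your side computation recovers the same value.

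Your observation that $c=1/2$ is forced within the family $c\mathds{1}+(1-c)\ket{S}\bra{S}$ is a nice justification for the effect, which the paper simply posits.
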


\begin{proof}
	First, we prove that the lower bound of simulation cost is $C_{\mathcal{A}}(\mathcal{M};\ket{S}\bra{S}) \geqslant 1$.
	For the Strange state $\ket{S}\bra{S}$, Fig. \ref{fig1} gives its discrete Wigner function.
	\begin{figure}[H]
		\centering
		\includegraphics[height=5cm,width=5cm]{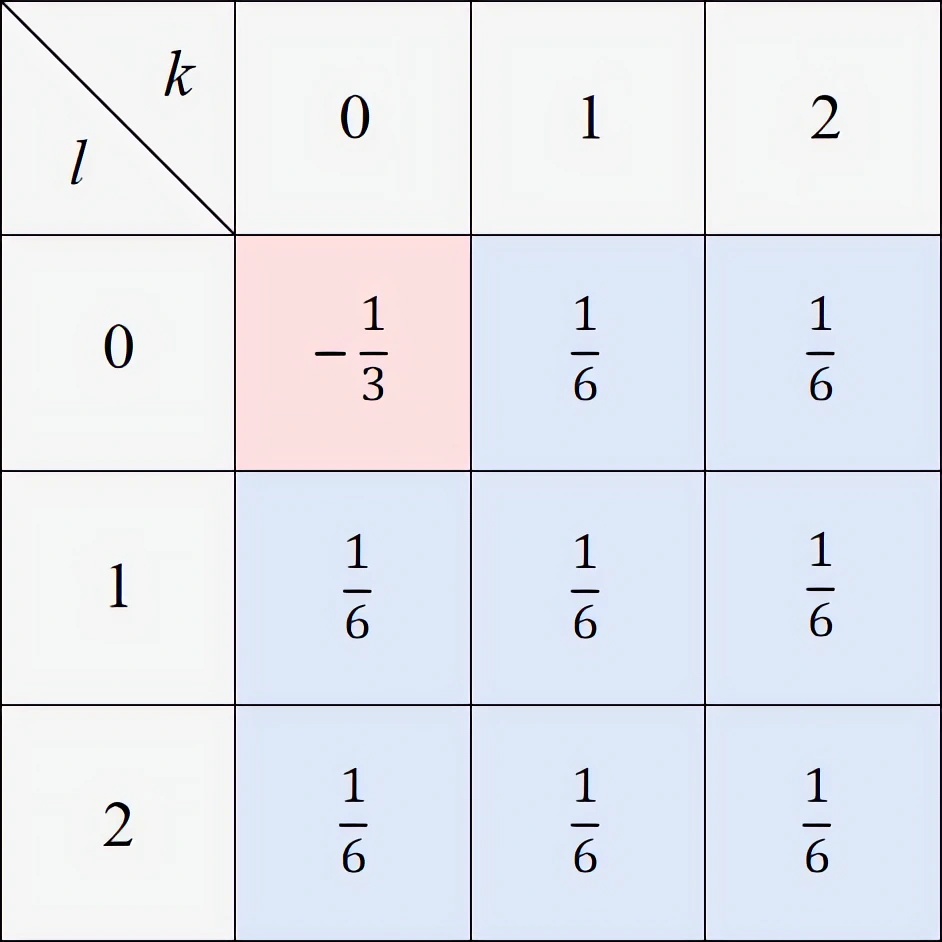}
		\caption{The discrete Wigner function of the Strange state.}
		\label{fig1}
	\end{figure}
	Let $\mathbf{u}_-$ denote the phase space point at which the discrete Wigner function of the Strange state takes a negative value, and let $\mathbf{u}_+$ denote the phase space point at which it takes a positive value, i.e.,
	\begin{equation}\label{Wigner function of Strange state}
		W_{\ket{S}\bra{S}}(\mathbf{u}_-)=-\frac{1}{3}, \quad
		W_{\ket{S}\bra{S}}(\mathbf{u}_+)=\frac{1}{6}.
	\end{equation}
	According to Eq. \eqref{definition of state mana}, we can get
	\begin{equation}
		M(\ket{S}\bra{S}) = \log_2\sum_{\mathbf{u}} \left|W_{\ket{S}\bra{S}}(\mathbf{u})\right| = \log_2\frac{5}{3}.
	\end{equation}
	
	Subsequently, we prove that $\mathbf{M}$ is not a PWF POVM. 
	By the definition of the discrete Wigner function of an effect, we have
	\begin{equation}\label{Wigner function of M0}
		\begin{aligned}
			W(M_0|\mathbf{u}_-) &= \mathrm{Tr}(A_{\mathbf{u}_-}M_0) \\ &= \frac{1}{2}\mathrm{Tr}(A_{\mathbf{u}_-}\ket{S}\bra{S}) + \frac{1}{6}\mathrm{Tr}\left(A_{\mathbf{u}_-}\mathds{1}\right) \\ &= \frac{3}{2}W_{\ket{S}\bra{S}}(\mathbf{u}_-)+\frac{1}{6} \\ &= -\frac{1}{3},
		\end{aligned}
	\end{equation}
	where the third equality follows from $\mathrm{Tr}(A_{\mathbf{u}}) = 1$ for each point $\mathbf{u}$ in phase space.
	Thus $\mathbf{M}$ is not a PWF POVM. 
	Then according to Theorem \ref{theorem 1}, the corresponding measurement channel $\mathcal{M}$ is not a CPWP channel.
	
	Next, we compute the mana of the measurement channel $\mathcal{M}$.
	For the measurement channel $\mathcal{M}$ associated with a POVM
	$\mathbf{M}=\{M_0,M_1,M_2\}$, we have
	\begin{equation}
			\sum_{\mathbf{v}} \left|W_{\mathcal{M}}(\mathbf{v}|\mathbf{u})\right|= \sum_{k=0}^{2} \sum_{\mathbf{v}\in L_k} \left| \frac{1}{3}W(M_k|\mathbf{u}) \right|= \sum_{k=0}^{2} \left|W(M_k|\mathbf{u})\right|,
	\end{equation}
	where the first equality follows from Eq. \eqref{Lagrange affine}, the last equality follows from the sum traverses all $\mathbf{v}$.
	Then we can get
	\begin{equation}
		W(M_0|\mathbf{u}_+) = \frac{3}{2}W_{\ket{S}\bra{S}} (\mathbf{u}_+)+\frac{1}{6} = \frac{5}{12}.
	\end{equation}
	Since $M_1=\mathds{1}-M_0$ and $M_2=0$, we have
	\begin{gather}
		W(M_1|\mathbf{u}_-) = \frac{4}{3}, \quad W(M_1|\mathbf{u}_+) = \frac{7}{12}, \\ 
		W(M_2|\mathbf{u}_-)=0, \quad W(M_2|\mathbf{u}_+)=0.
	\end{gather}
	Therefore, at the negative phase space point $\mathbf{u}_-$, we have
	\begin{equation}
		\sum_{k=0}^{2}\left|W(M_k|\mathbf{u}_-)\right| = \left|-\frac{1}{3}\right|+\left|\frac{4}{3}\right|+0=\frac{5}{3}.
	\end{equation}
	At the positive phase space point $\mathbf{u}_+$, we have
	\begin{equation}
		\sum_{k=0}^{2}\left|W(M_k|\mathbf{u}_+)\right| =
		\left|\frac{5}{12}\right|+\left|\frac{7}{12}\right|+0=1.
	\end{equation}
	Consequently,
	\begin{equation}
		M(\mathcal{M}) = \log_2\max_{\mathbf{u}}\sum_{\mathbf{v}} \left|W_{\mathcal{M}}(\mathbf{v}|\mathbf{u})\right| =  \log_2\frac{5}{3}.
	\end{equation}
	Then it can be obtained that
	\begin{equation}
		C_{\mathcal{A}}(\mathcal{M};\ket{S}\bra{S}) \geqslant \frac{M(\mathcal{M})}{M(\ket{S}\bra{S})} = \frac{\log_2(5/3)}{\log_2(5/3)} = 1.
	\end{equation}
	
	Afterwards, we prove that the upper bound of simulation cost is $C_{\mathcal{A}}(\mathcal{M};\ket{S}\bra{S}) \leqslant 1$.
	Let $\lambda=4/3$.
	Define a three-outcome POVM $\mathbf{E}=\{E_0,E_1,E_2\}$ with $E_0=\mathds{1}, E_1=0, E_2=0$.
	It is easy to see that $\mathbf{E}$ is a PWF POVM.
	Moreover, define another three-outcome POVM $\mathbf{F}=\{F_0,F_1,F_2\}$ with
	\begin{equation}
		F_i = \frac{M_i+(\lambda-1)E_i}{\lambda}, \quad i \in \left\{0,1,2\right\}.
	\end{equation}
	It is easy to see that $\mathbf{F}$ is a POVM since $F_0,F_1,F_2$ are positive semidefinite and $F_0+F_1+F_2 = \mathds{1}$. 
	We now verify that $\mathbf{F}$ is a PWF POVM. 
	By direct calculation, we have
	\begin{gather}
		W(F_0|\mathbf{u}_-) = 0, \quad W(F_1|\mathbf{u}_-) = 1, \quad W(F_2|\mathbf{u}_-) = 0, \\ W(F_0|\mathbf{u}_+) = \frac{9}{16}, \quad W(F_1|\mathbf{u}_+) = \frac{7}{16}, \quad W(F_2|\mathbf{u}_+)=0.
	\end{gather}
	Thus $\mathbf{F}$ is a PWF POVM. 
	Then let $\mathcal{F}$ and $\mathcal{E}$ be the measurement channels associated with $\mathbf{F}$ and $\mathbf{E}$, respectively. 
	According to Theorem \ref{theorem 1}, both $\mathcal{F}$ and $\mathcal{E}$ are CPWP channels. 
	Furthermore, we can get that
	\begin{equation}
		\mathcal{M} = \lambda\mathcal{F}-(\lambda-1)\mathcal{E}.
	\end{equation}

	Then we choose an effect $G$ such that
	\begin{equation}
		G = \frac{\mathds{1}+\ket{S}\bra{S}}{2}.
	\end{equation}
	It can be verified that  $0\leqslant G\leqslant\mathds{1}$ and $\mathrm{Tr}\left(G\ket{S}\bra{S}\right) = 1$.
	Moreover, we have
	\begin{gather}
		W(G|\mathbf{u}_-) = \frac{1}{2}+\frac{3}{2}W_{\ket{S}\bra{S}}(\mathbf{u}_-) = 0, \\ W(G|\mathbf{u}_+) = \frac{1}{2}+\frac{3}{2}W_{\ket{S}\bra{S}}(\mathbf{u}_+) = \frac{3}{4} = \frac{1}{\lambda}.
	\end{gather}
	Therefore, we have $0\leqslant W(G|\mathbf{u}) \leqslant 1/\lambda$ for arbitrary phase space point $\mathbf{u}$.
	The upper bound in Theorem \ref{theorem 2} applies with $n=1$, and hence
	\begin{equation}
	C_{\mathcal{A}}(\mathcal{M};\ket{S}\bra{S}) \leqslant 1.
	\end{equation}
	Combining the lower and upper bounds gives
	\begin{equation}
	C_{\mathcal{A}}(\mathcal{M};\ket{S}\bra{S}) = 1.
	\end{equation}
	
	This completes the proof.
\end{proof}

Proposition \ref{proposition 1} shows that the lower and upper bounds in Theorem \ref{theorem 2} coincide for the particular non-CPWP measurement channel considered above. 
Thus, its exact channel simulation cost is completely characterized. 
Moreover, the following proposition proves that one copy of the Strange state strictly improves the optimal success probability by using CSMs.

\begin{proposition}\label{proposition 2}
	Let $\rho_0 = \ket{S}\bra{S}$, $\rho_1 = (\mathds{1}-\ket{S}\bra{S})/2$.
	Consider the ensemble  $\Pi = \{(1/3,\rho_0), 
	(2/3,\rho_1)\}$, then the optimal success probability of discriminating this ensemble assisted by a copy of Strange state is strictly bigger than the optimal success probability using PWF POVMs, i.e.,
	\begin{equation}
		p_{\mathrm{succ}}^{\mathbf{PWF}+\ket{S}\bra{S}}(\Pi) > p_{\mathrm{succ}}^{\mathbf{PWF}}(\Pi).
	\end{equation}
\end{proposition}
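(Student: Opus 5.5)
The plan is to compute both sides exactly in the Wigner representation. Write $\Delta = p_0\rho_0 - p_1\rho_1$. For any two-outcome POVM $\{E,\mathds{1}-E\}$, the success probability is $p_1 + \mathrm{Tr}(E\Delta) = \tfrac{2}{3} + \mathrm{Tr}(E\Delta)$, because $p_1\rho_1$ has trace $2/3$. Since $p_1\rho_1 = (\mathds{1}-\ket{S}\bra{S})/3$, we get $\Delta = \tfrac{1}{3}(2\ket{S}\bra{S}-\mathds{1})$. Using property 6 of the phase space point operators and $W(E|\mathbf{u})=\mathrm{Tr}(A_{\mathbf{u}}E)$, this becomes $\mathrm{Tr}(E\Delta)=\sum_{\mathbf{u}} W(E|\mathbf{u})\,W_\Delta(\mathbf{u})$. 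Here $W_\Delta(\mathbf{u}) = \tfrac{1}{3}\bigl(2W_{\ket{S}\bra{S}}(\mathbf{u})-\tfrac{1}{3}\bigr)$. From Fig.~\ref{fig1} and Eq.~\eqref{Wigner function of Strange state}, the Strange state has Wigner value $-1/3$ at the single point $\mathbf{u}_-$ and $1/6$ at the other eight points; the values sum to $1$, which is consistent. Hence $W_\Delta(\mathbf{u}_-)=-1/3$ and $W_\Delta(\mathbf{u})=0$ for every $\mathbf{u}\neq\mathbf{u}_-$.

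\emph{Step 1 (unassisted value).} For a PWF POVM, $W(E|\mathbf{u}_-)\geqslant 0$. Therefore $\mathrm{Tr}(E\Delta) = -\tfrac{1}{3}W(E|\mathbf{u}_-)\leqslant 0$, so $p_{\mathrm{succ}}^{\mathbf{PWF}}(\Pi)\leqslant 2/3$, with equality attained by $E=0$. More outcomes do not help in binary discrimination: any PWF POVM with more outcomes can be coarse-grained into two effects, and sums of PWF effects are again PWF. So $p_{\mathrm{succ}}^{\mathbf{PWF}}(\Pi)=2/3$.

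\emph{Step 2 (assisted protocol).} I will reuse the construction of Proposition~\ref{proposition 1} and Theorem~\ref{theorem 2}. The measurement $\mathbf{M}$ has effect $M_0=\ket{S}\bra{S}/2+\mathds{1}/6$, whose Wigner function at $\mathbf{u}_-$ is $-1/3$ by Eq.~\eqref{Wigner function of M0}. Hence $\mathrm{Tr}(M_0\Delta) = (-1/3)(-1/3) = 1/9$, and the measurement $\{M_0, \mathds{1}-M_0\}$ succeeds with probability $7/9$.

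Theorem~\ref{theorem 2}, applied with $G=(\mathds{1}+\ket{S}\bra{S})/2$, $\lambda=4/3$ and $\mathcal{L}=\mathcal{E}$ (the trivial measurement channel with $E_0=\mathds{1}$), gives a CPWP channel $\mathcal{N}_{RA\to B}$ with $\mathcal{N}(\ket{S}\bra{S}\otimes\rho)=\mathcal{M}(\rho)$. Since $\mathcal{N}$ has a diagonal output in the stabilizer basis of $B$, it is itself a measurement channel on $RA$. Its effects are
\begin{equation}
P_0 = G\otimes M_0 + (\mathds{1}-G)\otimes\mathds{1},\quad P_1 = G\otimes M_1,\quad P_2 = 0 .
\end{equation}
By the converse direction of Theorem~\ref{theorem 1} (with $d_B=3$ prime), $\{P_0,P_1,P_2\}$ is a PWF POVM on the two-qutrit system. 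Because $\mathrm{Tr}(G\ket{S}\bra{S})=1$, we have $\mathrm{Tr}[P_k(\rho_i\otimes\ket{S}\bra{S})]=\mathrm{Tr}(M_k\rho_i)$. Consequently $p_{\mathrm{succ}}^{\mathbf{PWF}+\ket{S}\bra{S}}(\Pi)\geqslant 7/9>2/3$.

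\emph{Main obstacle.} The delicate point is Step 2's justification that the effects of $\mathcal{N}$ are PWF. Theorem~\ref{theorem 1} is stated for a single odd-prime-dimensional input, whereas here the input is the composite system $RA$ of dimension $9$. I expect its proof to go through verbatim, since it only uses the Lagrangian supports on the output $B$. As a fallback, I would verify positivity of $W(P_0|\mathbf{u}_R,\mathbf{u}_A)=W(G|\mathbf{u}_R)W(M_0|\mathbf{u}_A)+1-W(G|\mathbf{u}_R)$ directly. Writing $g=W(G|\mathbf{u}_R)\in\{0,3/4\}$ and $m=W(M_0|\mathbf{u}_A)\in\{-1/3,5/12\}$, the expression $gm+1-g$ is nonnegative in all four cases: it equals $1$ when $g=0$, and $1/2$ or $9/16$ when $g=3/4$. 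A similar check applies to $P_1$.

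A secondary check for Step 1 is the claim that the Strange state's Wigner function equals $1/6$ at every point other than $\mathbf{u}_-$, which I will read off from Fig.~\ref{fig1}.
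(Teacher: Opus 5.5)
Your proposal is correct and follows the paper's proof: you bound the unassisted value by $2/3$ through $\mathrm{Tr}(E\Delta)=-\tfrac{1}{3}W(E|\mathbf{u}_-)$, deriving $\Delta=-\tfrac{1}{3}A_{\mathbf{u}_-}$ from the full Wigner function of $\ket{S}$ where the paper uses the spectrum $\{+1,+1,-1\}$ of $A_{\mathbf{u}_-}$, and you reach $7/9$ via the one-copy simulation of Proposition~\ref{proposition 1}; your explicit joint POVM $\{P_0,P_1,P_2\}$ makes precise the step the paper states only as ``equivalently''. One arithmetic slip in your fallback check: for $g=3/4$ and $m=-1/3$ one gets $gm+1-g=0$, not $1/2$ (this matches $W(F_0|\mathbf{u}_-)=0$ in Proposition~\ref{proposition 1}, so $\lambda=4/3$ is exactly tight), but $0$ is still nonnegative, so your conclusion stands.
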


\begin{proof}
	Since the two quantum states $\rho_0$ and $\rho_1$ have orthogonal supports, they can be perfectly distinguished by an unrestricted measurement. 
	Thus, we can use unrestricted global POVMs to distinguish them perfectly, i.e., we have
	\begin{equation}
		p_{\mathrm{succ}}^{\mathbf{GLOBAL}}(\Pi)=1.
	\end{equation}
	
	Next, consider an arbitrary binary PWF POVM $\mathbf{E} = \{E,\mathds{1}-E\}$, where the outcome corresponding to $E$ is used to guess $\rho_0$.
	The optimal success probability achieved by this POVM is
	\begin{equation}
		\begin{aligned}
			p_{\mathrm{succ}}^{\mathbf{E}}(\Pi) &= \frac{1}{3}\mathrm{Tr}(E\rho_0) + \frac{2}{3}\mathrm{Tr}[(\mathds{1}-E)\rho_1]  \\ &= \frac{2}{3} + \mathrm{Tr} \left[E\left(\frac{1}{3}\rho_0 - \frac{2}{3}\rho_1\right)\right].
		\end{aligned}
	\end{equation}

	According to Eq. \eqref{Wigner function of Strange state}, we have
	\begin{equation}
		\mathrm{Tr}\left(A_{\mathbf{u}_{-}}\ket{S}\bra{S}\right) = -1.
	\end{equation}
	Since the qutrit phase-space point operator $A_{\mathbf{u}_{-}}$ is a unitary operator with eigenvalue $\{+1,+1,-1\}$ \cite{ZLZ+2024}, the above equality implies that $\ket{S}\bra{S}$ is the eigenvector of $A_{\mathbf{u}_{-}}$ corresponding to the eigenvalue $-1$.
	Therefore, we have
	\begin{equation}
		A_{\mathbf{u}_{-}} = (-1)\ket{S}\bra{S} + 1(\mathds{1}-\ket{S}\bra{S}) = \mathds{1}-2\ket{S}\bra{S},
	\end{equation}
	where the first equality follows from $\mathds{1}-\ket{S}\bra{S}$ corresponds to the eigenvector with the eigenvalue of $+1$.
	Then we can get
	\begin{equation}
		\frac{1}{3}\rho_0-\frac{2}{3}\rho_1 = \frac{2}{3}\ket{S}\bra{S}-\frac{1}{3}\mathds{1} = -\frac{1}{3}A_{\mathbf{u}_-}.
	\end{equation}
	It follows that
	\begin{equation}
		\begin{aligned}
			p_{\mathrm{succ}}^{\mathbf{E}}(\Pi) &= \frac{2}{3}+\mathrm{Tr}\left[E\left(\frac{1}{3}\rho_0 - \frac{2}{3}\rho_1\right)\right] \\ &= \frac{2}{3} - \frac{1}{3}\mathrm{Tr}(EA_{\mathbf{u}_-}) \\ &= \frac{2}{3}-\frac{1}{3}W(E|\mathbf{u}_-).
		\end{aligned}
	\end{equation}
	Since $E$ is a PWF effect, we have $W(E|\mathbf{u}_-) \geqslant 0$.
	Then we can get
	\begin{equation}
		p_{\mathrm{succ}}^{\mathbf{PWF}}(\Pi) \leqslant \frac{2}{3}.
	\end{equation}
	
	Now we consider the POVM $\mathbf{M}=\{M_0,M_1\}$ with $M_0=\ket{S}\bra{S}/2+\mathds{1}/6, M_1=\mathds{1}-M_0$.
	If the outcome corresponding to $M_0$ is used to guess $\rho_0$, while the outcome $M_1$ is used to guess $\rho_1$, then we have
	\begin{equation}
		\begin{aligned}
			p_{\mathrm{succ}}^{\mathbf{M}}(\Pi) &= \frac{2}{3}+\mathrm{Tr}\left[M_0\left(\frac{1}{3}\rho_0 - \frac{2}{3}\rho_1\right)\right] \\ &= \frac{2}{3} - \frac{1}{3}\mathrm{Tr}(M_0A_{\mathbf{u}_-}) \\ &= \frac{2}{3}-\frac{1}{3}W(M_0|\mathbf{u}_-) \\ &= \frac{7}{9},
		\end{aligned}
	\end{equation}
	where the last equality follows from Eq. \eqref{Wigner function of M0}.
	At this time, we can add a zero effect to POVMs $\mathbf{E}$ and $\mathbf{M}$, respectively. 
	In this way, both $\mathbf{E} = \{E,\mathds{1}-E,0\}$ and $\mathbf{M} = \{M_0,M_1,0\}$ can be regarded as three-outcome POVMs.
	By Proposition \ref{proposition 1}, the measurement channel associated with the three-outcome POVM $\mathbf{M}$ can be exactly simulated by a CPWP channel using one copy of the Strange state.
	Equivalently, one copy of the Strange state allows CSMs to implement the same discrimination power as $\mathbf{M}$.
	Consequently,
	\begin{equation}
		p_{\mathrm{succ}}^{\mathbf{PWF}+\ket{S}\bra{S}}(\Pi) \geqslant \frac{7}{9}.
	\end{equation}
	Finally, we have
	\begin{equation}
		p_{\mathrm{succ}}^{\mathbf{PWF}+\ket{S}\bra{S}}(\Pi) \geqslant \frac{7}{9} > \frac{2}{3} \geqslant p_{\mathrm{succ}}^{\mathbf{PWF}}(\Pi).
	\end{equation}
	
	This completes the proof.
\end{proof}

Proposition \ref{proposition 2} proves that consumable magic resources can improve success probability under CSMs.
However, this example shows that the assistance of a single copy of the Strange state is still insufficient to perfectly distinguish the Strange state from its orthogonal complement, which is also consistent with the result in Ref. \cite{ZLZ+2024}. 
Nevertheless, we conjecture that, with the assistance of sufficiently many copies of the Strange state, this ensemble can be perfectly distinguished using a PWF POVM.

For a general ensemble, we cannot derive the exact value of magic cost.
This motivates the following SDP formulation, which gives a direct way to compute the magic-assisted success probability for a specific binary QSD task.
We consider a binary QSD task with equal prior probabilities by using CSMs, i.e., using a PWF POVM $\mathbf{E}$ to distinguish the pair of states $\{\rho_0, \rho_1\}$ with the same prior probabilities.
Our goal is to maximize the average success probability, which is given by
\begin{equation}
	\begin{aligned}
		p_{\rm{succ}}^{\mathbf{E}}\left(\left\{1/2, \rho_i\right\}_{i=0}^{1}, \omega^{\otimes n}\right) = \max_{\{E_i\}\in\mathbf{E}}\frac{1}{2}\mathrm{Tr}\left[E_0 \left(\rho_0 \otimes \omega^{\otimes n}\right)\right] \\ + \frac{1}{2}\mathrm{Tr}\left[E_1 \left(\rho_1 \otimes \omega^{\otimes n}\right)\right].
	\end{aligned}
\end{equation}
Then, we can give the primal problem of SDP of the magic-assisted average
success probability by PWF POVMs.
\begin{equation}
	\begin{aligned}
		&\max_{E} \quad &&\frac{1}{2}+\frac{1}{2}\mathrm{Tr}\{E[(\rho_0 - \rho_1)\otimes \omega^{\otimes n}]\}, \\ &\mathrm{subject~to} \quad && 0\leqslant E\leqslant\mathds{1},  \\ &&& 0\leqslant \mathrm{Tr}(EA_{\mathbf{u}})\leqslant 1 \quad \forall u,
	\end{aligned}
\end{equation}
where $E$ is the effect associated with the guess $\rho_0$, so that $\{E,\mathds{1}-E\}$ is a binary PWF POVM, $\omega$ is a magic state, the constraints follow from the properties of PWF POVM.
We give the dual problem of this SDP in Appendix A.
This SDP provides a method to directly compute the optimal success probability of magic-assisted binary state discrimination for a fixed number of copies of magic states. 
By solving this SDP for different $n$, we can numerically estimate the minimum amount of magic resources required for PWF POVMs to achieve the performance of global measurements.

\section{Using quantum catalysts and memories to improve the discrimination power}
We have verified that adding magic resources to the set of CSMs can improve its discrimination power. 
Furthermore, a natural question is whether quantum catalysts and memories can be used to improve the discrimination power of CSMs. 
We will explore this question in this section.

The quantum catalyst is a quantum system that enables quantum state transformation that would otherwise be impossible. 
The core idea is that the catalytic process will not change the catalytic state, so the corresponding catalytic state can be reused to realize the transformation process of the same quantum state $\rho$.
Specifically, we consider a binary QSD task. 
In this setting, the goal of using a quantum catalyst to assist CSMs in the task of QSD is to find a quantum catalytic state $\mu_C$ and a CPWP operation $\Lambda$ such that
\begin{gather}
	\Lambda \left( \rho_i^A \otimes \mu_C \right) = \sigma_i^A \otimes \mu_C, \\ p_{\rm{succ}}^{\mathbf{PWF}}\left(\{p_i,\sigma_i\}_{i=0}^1\right) \geqslant p_{\rm{succ}}^{\mathbf{PWF}}\left(\{p_i,\rho_i\}_{i=0}^1\right),
\end{gather}
where $\mu_C$ is the unchanged catalytic state, which enables $\rho_i^A$ to be catalytically transformed into $\sigma_i^A$, and makes the new pair of states $\{p_i,\sigma_i\}_{i=0}^1$ more distinguishable than $\{p_i,\rho_i\}_{i=0}^1$ under CSMs.
Furthermore, the quantum memory is an auxiliary quantum system that can be updated during the evolution process and retain historical information in this paper, thereby influencing the inputs, operations, or measurement strategies in subsequent rounds \cite{CDP2008}. 
Unlike a quantum catalyst, a quantum memory is not required to return to its initial state, instead, it evolves throughout the process and carries historical information.

Inspired by the general state discrimination framework under LOCC protocol in Ref. \cite{PS2025}, we give the general state discrimination framework under CPWP protocol to study whether quantum catalysts and memories can be used to improve the discrimination power of CSMs.
First, we introduce the general state discrimination framework under CPWP protocol assisted by quantum memories.

Let $\rho_0,\rho_1$ be two states on system $A$ with prior probabilities $p_0,p_1>0$ and $p_0+p_1=1$. 
In the $j$-th round, let $Z_j\in\{0,1\}$ be an independent identically distributed (i.i.d.) random variable which is sampled independently according to
\begin{equation}
	\Pr(Z_j=i)=p_i,
\end{equation}
and the corresponding state $\rho_{Z_j}$ is prepared on system $A$. 
A memory-assisted PWF discrimination protocol consists of an initial memory state $\mu_C$, and a sequence of adaptive CPWP instruments $\Lambda_{AC}$. 
A quantum instrument is a completely positive and trace-non-increasing map that gives both classical and quantum outputs \cite{DL1970,KW2020}.
The instrument used in the $j$-th round depends on all previous classical outcomes of the protocol. 
It produces a classical guess $Y_j\in\{0,1\}$ and an updated memory state for the next round.
Subsequently, we can define the variable $X_j$ to indicate whether the classical guess $Y_j$ is the same as the original variable $Z_j$ and the total number of correct guesses $S_n$ after $n$ rounds 
\begin{gather}
	X_j=
	\begin{cases}
		1, & Y_j=Z_j,\\
		0, & Y_j\neq Z_j,
	\end{cases} \\
	S_n\coloneqq \sum_{j=1}^{n}X_j.
\end{gather}

Since catalyst-assisted discrimination is a special case of memory-assisted discrimination, we replace the evolving memory state $\mu_C$ with an unchanged catalytic state $\mu_C$ in the catalyst-assisted discrimination framework. 
Meanwhile, $\Lambda_{AC}$ is no longer an adaptive CPWP instrument, instead, it outputs a classical guess together with the unchanged catalytic state $\mu_C$.
Fig. \ref{fig2} describes the specific process of general state discrimination framework under CPWP protocol assisted by quantum memories and catalysts.
\begin{figure*}
	\centering
	\includegraphics[height=6cm,width=16.5cm]{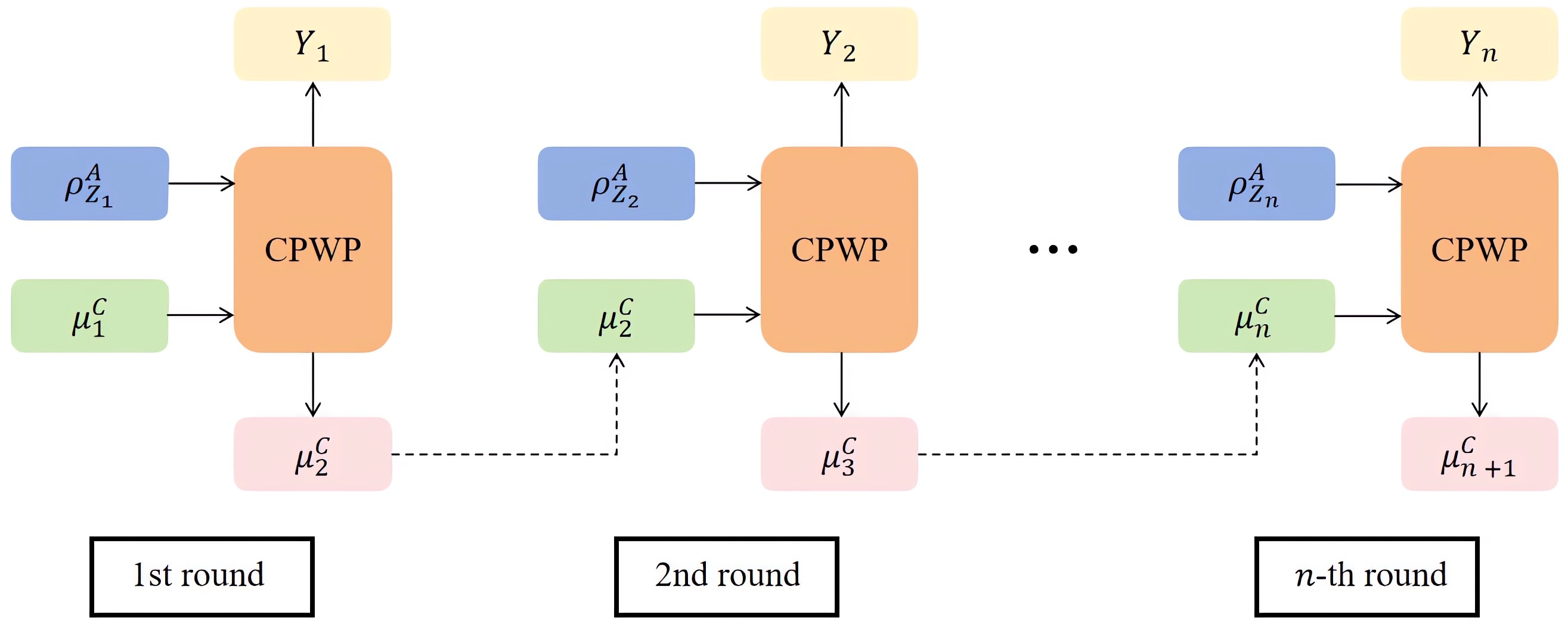}
	\caption{\textbf{General state discrimination framework under CPWP protocol assisted by quantum memories and catalysts.} In each round, the input state $\rho_{Z_j}^A$ and the memory state $\mu_j^C$ are jointly processed by a CPWP instrument, producing a classical guess $Y_j$ and an updated memory state $\mu_{j+1}^C$. The updated memory is reused in the next round, thereby allowing information from previous rounds to affect later discrimination strategies. In the catalytic setting, the memory state remains unchanged in every round, i.e., $\mu_1^C=\mu_j^C$ for all $j \in \left\{2,\cdots,n\right\}$.}
	\label{fig2}
\end{figure*}

After defining these variables, we can introduce the definition of achievable success rate for discrimination.
\begin{definition}
	A rate $r\in[0,1]$ is said to be achievable if for every $\varepsilon>0$ and every positive integer $m$, there exists an integer $n\geqslant m$ such that
	\begin{equation}
		\Pr(S_n\geqslant rn)\geqslant 1-\varepsilon.
	\end{equation}
	Moreover, the optimal success rate $R$ is obtained by taking the supremum over all achievable rates $r$.
\end{definition}
Hereinafter, we will use $R_m$ to denote the optimal memory-assisted PWF discrimination rate and $R_c$ to denote the optimal catalyst-assisted PWF discrimination rate.
Before introducing the main theorem in this section, we first give the following lemma.

\begin{lemma}\label{lemma1}
	Let $\Delta_A$ be a Hermitian operator, $\gamma_C$ be a PWF state. Then
	\begin{equation}
		\left\|\Delta_A\otimes\gamma_C\right\|_{\mathbf{PWF}}
		=
		\left\|\Delta_A\right\|_{\mathbf{PWF}}.
	\end{equation}
\end{lemma}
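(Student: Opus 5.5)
We prove the two inequalities $\|\Delta_A\otimes\gamma_C\|_{\mathbf{PWF}}\geqslant\|\Delta_A\|_{\mathbf{PWF}}$ and $\|\Delta_A\otimes\gamma_C\|_{\mathbf{PWF}}\leqslant\|\Delta_A\|_{\mathbf{PWF}}$ separately. In both directions we work with the definition $\|\Delta\|_{\mathbf{E}}=\max_{\{E_i\}\in\mathbf{E}}\sum_i|\mathrm{Tr}(E_i\Delta)|$, where $\mathbf{E}$ ranges over PWF POVMs on the relevant system. We also use the product structure of phase-space point operators, $A_{\mathbf{u}_A\oplus\mathbf{u}_C}=A_{\mathbf{u}_A}\otimes A_{\mathbf{u}_C}$, which follows from $T_{\mathbf{u}_1\oplus\mathbf{u}_2}=T_{\mathbf{u}_1}\otimes T_{\mathbf{u}_2}$.

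For the lower bound, take any PWF POVM $\{E_i\}$ on $A$ and lift it to $\{E_i\otimes\mathds{1}_C\}$ on $AC$. This is a POVM, and its Wigner function is
\begin{equation}
W(E_i\otimes\mathds{1}_C|\mathbf{u}_A,\mathbf{u}_C)=\mathrm{Tr}(A_{\mathbf{u}_A}E_i)\,\mathrm{Tr}(A_{\mathbf{u}_C})=W(E_i|\mathbf{u}_A)\geqslant0,
\end{equation}
using $\mathrm{Tr}(A_{\mathbf{u}})=1$. Hence the lifted POVM is PWF. Moreover, $\mathrm{Tr}[(E_i\otimes\mathds{1})(\Delta\otimes\gamma)]=\mathrm{Tr}(E_i\Delta)$ because $\mathrm{Tr}\gamma=1$. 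Maximizing over $\{E_i\}$ gives $\|\Delta_A\otimes\gamma_C\|_{\mathbf{PWF}}\geqslant\|\Delta_A\|_{\mathbf{PWF}}$.

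For the upper bound, take any PWF POVM $\{F_i\}$ on $AC$ and define the reduced operators $G_i\coloneqq\mathrm{Tr}_C[(\mathds{1}_A\otimes\gamma_C)F_i]$. These satisfy $G_i\geqslant0$ and $\sum_iG_i=\mathrm{Tr}_C(\mathds{1}\otimes\gamma)=\mathds{1}_A$, so $\{G_i\}$ is a POVM on $A$, and $\mathrm{Tr}[F_i(\Delta\otimes\gamma)]=\mathrm{Tr}(G_i\Delta)$. It remains to show that $\{G_i\}$ is PWF, which is the key step. Expand $\gamma_C=\sum_{\mathbf{u}_C}W_\gamma(\mathbf{u}_C)A_{\mathbf{u}_C}$ by property 5 of the phase-space point operators. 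Then
\begin{equation}
W(G_i|\mathbf{u}_A)=\mathrm{Tr}[(A_{\mathbf{u}_A}\otimes\gamma_C)F_i]=\sum_{\mathbf{u}_C}W_\gamma(\mathbf{u}_C)\,W(F_i|\mathbf{u}_A,\mathbf{u}_C)\geqslant0,
\end{equation}
since both factors are nonnegative: $\gamma_C$ is a PWF state and $\{F_i\}$ is a PWF POVM. Thus $\sum_i|\mathrm{Tr}[F_i(\Delta\otimes\gamma)]|=\sum_i|\mathrm{Tr}(G_i\Delta)|\leqslant\|\Delta_A\|_{\mathbf{PWF}}$, and maximizing over $\{F_i\}$ gives the reverse inequality.

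The only point needing care is the factorization $A^{AC}_{\mathbf{u}}=A^A_{\mathbf{u}_A}\otimes A^C_{\mathbf{u}_C}$ and its normalization. With the paper's conventions, $A_{\mathbf{0}}=\frac1d\sum T_{\mathbf{u}}$ factorizes because the sum over a product phase space splits into a product of sums and the normalization $d=d_Ad_C$ splits accordingly; conjugation by $T_{\mathbf{u}_A}\otimes T_{\mathbf{u}_C}$ then preserves the factorization. We also need the number of outcomes to match on both sides; if they differ, pad the POVMs with zero effects, which does not change any of the sums.
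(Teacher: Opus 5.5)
Your proof is correct. The lower bound is the same as the paper's; you also verify explicitly, via $A_{\mathbf{u}_A\oplus\mathbf{u}_C}=A_{\mathbf{u}_A}\otimes A_{\mathbf{u}_C}$ and $\mathrm{Tr}(A_{\mathbf{u}_C})=1$, the PWF property of $\{E_i\otimes\mathds{1}_C\}$ that the paper calls ``easy to see''.

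For the upper bound, the paper argues at the level of channels. It converts the PWF POVM on $AC$ into a CPWP measurement channel via Theorem~1 and precomposes it with the appending channel $\rho_A\mapsto\rho_A\otimes\gamma_C$. It notes that the composition is again a CPWP measurement channel, and then uses the converse direction of Theorem~1 to extract a PWF POVM on $A$.

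You instead write that POVM down explicitly, $G_i=\mathrm{Tr}_C[(\mathds{1}_A\otimes\gamma_C)F_i]$, and check directly that $W(G_i|\mathbf{u}_A)=\sum_{\mathbf{u}_C}W_{\gamma}(\mathbf{u}_C)\,W(F_i|\mathbf{u}_A,\mathbf{u}_C)\geqslant 0$. This route is more elementary and self-contained. It needs neither direction of Theorem~1, which is stated only for odd-prime dimensions with an odd-prime number of outcomes recorded in a stabilizer basis; the paper's argument tacitly has to stretch or pad to fit those hypotheses. It also avoids having to show that the appending channel is \emph{completely} CPWP, which the paper only asserts for PWF inputs on $A$ without a reference system.

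The paper's route, in exchange, places the lemma inside a general principle: precomposing a PWF measurement with any CPWP channel yields a PWF measurement. That is the channel-simulation viewpoint used throughout the paper. Your computation is exactly that principle made concrete for the appending channel.
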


\begin{proof}
	We first prove $\left\|\Delta_A\otimes\gamma_C\right\|_{\mathbf{PWF}} \geqslant \left\|\Delta_A\right\|_{\mathbf{PWF}}$. 
	Let $\mathbf{E}^A=\{E_i^A\}_i$ be an arbitrary PWF POVM, we set $\mathbf{E}^{AC}=\{E_i^A\otimes \mathds{1}_C\}_i$. 
	It is easy to see that $\mathbf{E}^{AC}$ is also a PWF POVM. 
	Since
	\begin{equation}
		\begin{aligned}
			\left\|\Delta_A\otimes\gamma_C\right\|_{\mathbf{PWF}} &\geqslant \sum_i\left| \mathrm{Tr}\left[(E_i^A\otimes \mathds{1}_C) (\Delta_A\otimes\gamma_C)\right] \right| \\ &= \sum_i \left| \mathrm{Tr}(E_i^A\Delta_A) \right|.
		\end{aligned}
	\end{equation}
	Then we take the maximum over $\{E_i^A\}_i$, we can get
	\begin{equation}
		\left\|\Delta_A\otimes\gamma_C\right\|_{\mathbf{PWF}} \geqslant \left\|\Delta_A\right\|_{\mathbf{PWF}}.
	\end{equation}
	
	Subsequently, we prove $\left\|\Delta_A\otimes\gamma_C\right\|_{\mathbf{PWF}} \leqslant \left\|\Delta_A\right\|_{\mathbf{PWF}}$. 
	Let $\mathcal{E}^{AC\rightarrow B}$ be a measurement channel associated to $\mathbf{E}^{AC}=\{E_i^{AC}\}_i$, then we have
	\begin{equation}
		\mathcal{E}^{AC\rightarrow B}(\rho_{AC}) = \sum_i \mathrm{Tr}\left(E_i^{AC}\rho_{AC}\right)\ket{i}\bra{i},
	\end{equation}
	where $\{\ket{i}\}$ is an orthonormal basis on system $B$.
	Since $\mathbf{E}^{AC}$ is a PWF POVM, the measurement channel $\mathcal{E}^{AC\rightarrow B}$ is a CPWP channel.
	Then we consider a special preparation channel $\mathcal{P}_{\gamma_C}^{A\rightarrow AC}$ of the PWF state $\gamma_C$, which is defined as follows
	\begin{equation}
		\mathcal{P}_{\gamma_C}^{A\rightarrow AC}(\rho_A) = \rho_A \otimes\gamma_C.
	\end{equation}
	Because $\gamma_C$ is a PWF state, this special preparation channel $\mathcal{P}_{\gamma_C}^{A\rightarrow AC}$ is a CPWP channel for any PWF input state $\rho_A$. 
	Therefore, the composed channel $\mathcal{N}^{A\rightarrow B} \coloneqq \mathcal{E}^{AC\rightarrow B}\circ \mathcal{P}_{\gamma_C}^{A\rightarrow AC}$ is a CPWP channel. 
	Moreover, $\mathcal{N}^{A\rightarrow B}$ is a measurement channel, so there exists a PWF POVM $\mathbf{F}^A=\{F_i^A\}_i$ such that
	\begin{equation}
		\mathcal{F}^{A\rightarrow B}(\rho_A) = \sum_i \mathrm{Tr}(F_i^A \rho_A)\ket{i}\bra{i}.
	\end{equation}
	According to the definition of the composed channel $\mathcal{F}^{A\rightarrow B}$, we have
	\begin{equation}
		\sum_i \left|\mathrm{Tr}\left[E_i^{AC}(\Delta_A\otimes\gamma_C) \right] \right| = \sum_i \left| \mathrm{Tr}(F_i^A\Delta_A) \right| \leqslant \left\|\Delta_A\right\|_{\mathbf{PWF}}.
	\end{equation}
	Then we take the maximum over $\{E_i^{AC}\}_i$, we can get
	\begin{equation}
		\left\|\Delta_A\otimes\gamma_C\right\|_{\mathbf{PWF}} \leqslant \left\|\Delta_A\right\|_{\mathbf{PWF}}.
	\end{equation}
		This completes the proof.
\end{proof}

Next, we present the main theorem of this section. 
We prove that neither quantum memories nor quantum catalysts can improve the discrimination power of CSMs when distinguishing the pair of PWF states.

\begin{theorem}\label{theorem 3}
	Let $\rho_0,\rho_1$ be two PWF states with prior probabilities $p_0,p_1>0$ and $p_0+p_1=1$. 
	Then we have
	\begin{equation}
		R_m\left(\{p_i,\rho_i\}_{i=0}^{1}\right) = p_{\mathrm{succ}}^{\mathbf{PWF}}\left(\{p_i,\rho_i\}_{i=0}^{1}\right),
	\end{equation}
	i.e., quantum memories cannot improve the discrimination power of CSMs for distinguishing the pair of PWF states.
\end{theorem}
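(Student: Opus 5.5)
We prove two inequalities. The bound $R_m \geqslant p_{\mathrm{succ}}^{\mathbf{PWF}}$ is the easy direction. Ignore the memory, and in every round apply the optimal PWF POVM $\{E_0,E_1\}$ for the single-copy ensemble, tensored with $\mathds{1}_C$. By Theorem \ref{theorem 1} this is a CPWP instrument. The indicators $X_j$ are then i.i.d. Bernoulli variables with mean $p^\ast \coloneqq p_{\mathrm{succ}}^{\mathbf{PWF}}(\{p_i,\rho_i\}_{i=0}^1)$. The weak law of large numbers then gives $\Pr(S_n\geqslant rn)\to 1$ for every $r<p^\ast$, so every $r<p^\ast$ is achievable and $R_m\geqslant p^\ast$.

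For the converse we aim for a per-round bound: conditioned on any history $h_{j-1}$ of previous outcomes, we want $\Pr(X_j=1\mid h_{j-1})\leqslant p^\ast$. The key structural observation is that, since $\rho_0,\rho_1\in\mathcal{W}_+$, the initial memory $\mu_1$ can be taken PWF. Each CPWP instrument maps PWF inputs to sub-normalized PWF outputs on each classical branch. Averaging over $Z_j$, every conditional memory state $\mu_j^{(h)}$, obtained by normalizing a branch of a CPWP instrument applied to $\rho_{Z}\otimes\mu$, remains PWF.

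The obstacle is that the memory is correlated with the past $Z$'s, so conditioned on the history the memory need not be independent of the previous inputs. However $Z_j$ is fresh and independent of everything before round $j$. Hence the round-$j$ input is $\rho_{Z_j}\otimes\mu_j^{(h)}$ with $\mu_j^{(h)}$ independent of $Z_j$. The guess $Y_j$ comes from a binary POVM $\{G_0,G_1\}$ on $AC$ obtained by summing the instrument branches. This POVM is PWF, because its measurement channel is the composition of a CPWP instrument with a trace over memory, and trace-out is CPWP. So we get
\begin{equation}
\Pr(X_j=1\mid h)=\frac12\Bigl(1+\mathrm{Tr}\bigl[G_0(p_0\rho_0-p_1\rho_1)\otimes\mu_j^{(h)}\bigr]\cdot 2+\ldots\Bigr)\leqslant \frac12\Bigl(\bigl\|(p_0\rho_0-p_1\rho_1)\otimes\mu_j^{(h)}\bigr\|_{\mathbf{PWF}}+1\Bigr).
\end{equation}
Lemma \ref{lemma1} then reduces the right-hand side to $p^\ast$. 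I expect the main obstacle to be rigorously justifying that the conditional memory is PWF. If the paper allows an arbitrary (possibly magic) initial $\mu_1$, Lemma \ref{lemma1} would not apply directly. I would first try to restrict the framework to free memories, or argue via the finite dimension of $C$.

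Given the per-round bound, the process $M_n = S_n - np^\ast$ is a supermartingale with bounded increments. Azuma--Hoeffding then gives $\Pr(S_n\geqslant rn)\leqslant \exp(-n(r-p^\ast)^2/2)\to 0$ for every $r>p^\ast$. So no such $r$ is achievable, which gives $R_m\leqslant p^\ast$. The catalytic rate $R_c$ then follows as the special case with a fixed memory.
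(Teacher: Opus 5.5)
Your lower bound, and your per-round bound plus Azuma argument for a memory that is PWF, are correct and match the corresponding part of the paper. The gap is the claim that ``since $\rho_0,\rho_1\in\mathcal{W}_+$, the initial memory $\mu_1$ can be taken PWF.'' Nothing about the inputs constrains $\mu_1$. In the paper's framework the initial memory $\mu_C$ is an arbitrary, possibly magic, state on a finite-dimensional system $C$. For such $\mu_C$ the conditional memories need not be PWF, so Lemma~\ref{lemma1} does not apply and the per-round bound $\Pr(X_j=1\mid h)\leqslant p^\ast$ is simply unavailable. Handling this case is the actual content of the theorem. Note that your argument never uses that $C$ is finite-dimensional, and with a PWF memory the catalytic corollary would be trivial. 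Restricting the framework to free memories, as you propose, would prove a strictly weaker statement.

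The missing step is a comparison with the maximally mixed memory $\gamma_C=\mathds{1}/d$, which is PWF. Run the same adaptive protocol starting from $\gamma_C$. All conditional memories then stay PWF, and your supermartingale argument gives $\Pr(S_n\geqslant (p^\ast+\delta)n\mid\gamma_C)\leqslant e^{-n\delta^2/2}$. For a fixed protocol, the probability of this event is linear in the initial memory, namely $\mathrm{Tr}(Q_n\mu)$ for some $0\leqslant Q_n\leqslant\mathds{1}$. Since $\mu_C\leqslant\mathds{1}=d\,\gamma_C$, this gives $\Pr(S_n\geqslant (p^\ast+\delta)n\mid\mu_C)\leqslant d\,e^{-n\delta^2/2}\to 0$. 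That contradicts achievability of any $r\geqslant p^\ast+\delta$ along arbitrarily large $n$. The paper phrases the same step differently: the threshold test $S_n/n\geqslant p^\ast+\delta$ would distinguish $\mu_C$ from $\gamma_C$ with success probability tending to $1$, contradicting the Helstrom bound $\frac12+\frac14\|\mu_C-\gamma_C\|_1<1$, which holds because $\gamma_C$ is full rank. Finite dimension is exactly what makes $\gamma_C$ a valid state and this constant independent of $n$.

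Two smaller fixes. Your displayed formula should read $\Pr(X_j=1\mid h)=p_1+\mathrm{Tr}[G_0((p_0\rho_0-p_1\rho_1)\otimes\mu_j^{(h)})]\leqslant\frac12(1+\|(p_0\rho_0-p_1\rho_1)\otimes\mu_j^{(h)}\|_{\mathbf{PWF}})$. Also, the history you condition on must include the past $Z$'s, so that $S_j$ is adapted to the filtration used in the supermartingale argument.
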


\begin{proof}
	First, we prove $R_m\left(\{p_i,\rho_i\}_{i=0}^{1}\right) \geqslant p_{\mathrm{succ}}^{\mathbf{PWF}}\left(\{p_i,\rho_i\}_{i=0}^{1}\right)$.
	This inequality is obvious, since the memories can always be ignored and one can directly perform an optimal PWF POVM. 
	Therefore, the optimal memory-assisted rate cannot be lower than the success probability achieved by PWF measurements.
	
	Subsequently, we prove $R_m\left(\{p_i,\rho_i\}_{i=0}^{1}\right) \leqslant p_{\mathrm{succ}}^{\mathbf{PWF}}\left(\{p_i,\rho_i\}_{i=0}^{1}\right)$.
	We will prove this inequality by contradiction.
	We optimize over protocols with a finite-dimensional reusable memory. 
	The memory dimension is not allowed to scale with $n$.
	Let $p_* = p_{\mathrm{succ}}^{\mathbf{PWF}} \left(\{p_i,\rho_i\}_{i=0}^{1}\right)$ and $r = R_m\left(\{p_i,\rho_i\}_{i=0}^{1}\right)$.
	Suppose that there exists a $d$-dimensional reusable quantum memory state $\mu_C$ and an adaptive sequence of CPWP instruments such that
	\begin{equation}
		r>p_*.
	\end{equation}
	Then we choose an achievable rate $r$ and $\delta>0$ such that
	\begin{equation}\label{r>p*}
		r>p_*+\delta.
	\end{equation}
	According to the definition of the achievable rate, for every $\varepsilon>0$ and every positive integer $m$, there exists an integer $n\geqslant m$ such that
	\begin{equation}
		\Pr\left(S_n\geqslant rn\,\middle|\,\mu_C\right) \geqslant 1-\varepsilon.
	\end{equation}
	Combining this inequality with Eq. \eqref{r>p*}, we can get
	\begin{equation}\label{memory probability}
		\Pr\left(\frac{S_n}{n}\geqslant p_*+\delta\,\middle|\, \mu_C \right) \geqslant 1-\varepsilon.
	\end{equation}
	
	Next, let $\gamma_C$ be the maximally mixed state, i.e., $\gamma_C \coloneqq \mathds{1}/d$. 
	It is easy to see that $\gamma_C$ is a stabilizer state, so it is also a PWF state.
	Since $\gamma_C$ is full rank, $\mu_C$ and $\gamma_C$ are not orthogonal.
	Then, we consider a binary QSD task with equal prior probabilities, i.e., using the global measurements to distinguish the pair of states $\{\mu_C, \gamma_C\}$ with prior probabilities $1/2$. 
	Therefore, the optimal success probability for distinguishing $\mu_C$ from $\gamma_C$ satisfies 
	\begin{equation}
		p_{\mathrm{opt}}(\mu_C,\gamma_C)= \frac{1}{2} + \frac{1}{4} \left\|\mu_C-\gamma_C\right\|_1 <1.
	\end{equation}
	
	Assume that the initial state of the memory state is either $\mu_C$ or $\gamma_C$ with equal probability $1/2$. 
	In each round $j$, the variable $Z_j\in\{0,1\}$ is sampled according to the prior probabilities $p_0,p_1$. 
	If $Z_j=i$, then the PWF state $\rho_i$ is prepared. 
	Then the $j$-th CPWP instrument is applied to the joint system $AC$, producing a classical guess $Y_j\in\{0,1\}$ and an updated memory state. 
	Next, we define
	\begin{equation}
		X_j=
		\begin{cases}
			1, & Y_j=Z_j,\\
			0, & Y_j\neq Z_j.
		\end{cases}
	\end{equation}
	After $n$ rounds, we compute $S_n=\sum_{j=1}^{n}X_j$. 
	If the following inequality is satisfied
	\begin{equation}\label{decision rule}
		\frac{S_n}{n}\geqslant p_*+\delta,
	\end{equation}
	we guess that the initial memory state is $\mu_C$.
	Otherwise, we guess that the initial memory state is $\gamma_C$.
	Subsequently, we will consider the following two cases.
	
	(i)~If the initial memory state is $\mu_C$. 
	Then combined with Eq. \eqref{memory probability} and  Eq. \eqref{decision rule} , we can get the probability of guessing correctly is at least $1-\varepsilon$, i.e.,
	\begin{equation}\label{probability mu}
		P_{\mathrm{corr}}(\mu_C)\geqslant 1-\varepsilon.
	\end{equation}
	
	(ii)~If the initial memory state is $\gamma_C$. 
	Let $\mathcal{F}_j$ denote the history of classical guess generated by CPWP instruments up to the end of round $j$, $\gamma_C(\mathcal{F}_j)$ denote the conditional memory state which depends on the history $\mathcal{F}_j$ at the beginning of the $(j+1)$-th round.
	Since $\rho_0, \rho_1, \gamma_C$ are PWF states, and each instrument is CPWP instrument, the conditional memory state $\gamma_C(\mathcal{F}_j)$ is a PWF state. 
	At this point, the pair of input states to be distinguished is $\{\rho_0\otimes\gamma_C(\mathcal{F}_j), \rho_1\otimes\gamma_C(\mathcal{F}_j)\}$ with prior probabilities $p_0,p_1$.
	Then we can get that
	\begin{equation}
		\left\|\left(p_0\rho_0-p_1\rho_1\right)\otimes \gamma_C(\mathcal{F}_j) \right\|_{\mathbf{PWF}} = \left\|p_0\rho_0-p_1\rho_1\right\|_{\mathbf{PWF}},
	\end{equation}
	where the equality follows from the conditional memory state $\gamma_C(\mathcal{F}_j)$ is a PWF state and Lemma \ref{lemma1}.
	Therefore, we have $p_{\mathrm{succ}}^{\mathbf{PWF}}\left(\{p_i,\rho_i \otimes \gamma_C(\mathcal{F}_j) \}_{i=0}^{1}\right) = p_{\mathrm{succ}}^{\mathbf{PWF}}\left(\{p_i,\rho_i\}_{i=0}^{1}\right)$.
	So the success probability of the guess in the $(j+1)$-th round cannot exceed $p_*$, i.e.,
	\begin{equation}\label{conditional success bound}
		\Pr(X_{j+1}=1\mid \mathcal{F}_j) \leqslant p_*.
	\end{equation}
	
	Next, we define the random variable $C_j\coloneqq S_j-jp_*$ for $1\leqslant j \leqslant n$ and $C_0=0$.
	Using Eq.~\eqref{conditional success bound}, we obtain
	\begin{equation}
		\begin{aligned}
			\mathbb{E}\left(C_{j+1}\mid \mathcal{F}_j\right) &= \mathbb{E}\left(S_j+X_{j+1}-(j+1)p_*\,\middle|\,\mathcal{F}_j\right) \\ &= C_j+\mathbb{E}\left(X_{j+1}\mid \mathcal{F}_j\right)-p_*\\ &= C_j+\Pr(X_{j+1}=1\mid \mathcal{F}_j)-p_* \\ &\leqslant C_j,
		\end{aligned}
	\end{equation}
	where the first equality follows from the definition of $C_{j+1}$, the second equality follows from $C_{j+1} = (S_j-jp_*)+X_{j+1}-p_* = C_j+X_{j+1}-p_*$ and $\mathbb{E}(C_j\mid \mathcal{F}_j) = C_j$, the last equality follows from $\mathbb{E}\left(X_{j+1}\mid \mathcal{F}_j\right) = 0 \cdot \Pr(X_{j+1}=0\mid \mathcal{F}_j) + 1 \cdot \Pr(X_{j+1}=1\mid \mathcal{F}_j) = \Pr(X_{j+1}=1\mid \mathcal{F}_j)$, the inequality follows from Eq. \eqref{conditional success bound}.
	Hence, $\{C_j\}_{j=0}^{n}$ is a supermartingale \cite{R1995}. 
	Since $C_{j+1}-C_j=X_{j+1}-p_*$, $X_{j+1}\in\{0,1\}$ and $0\leqslant p_*\leqslant 1$, we have
	\begin{equation}
		\left|C_{j+1}-C_j\right|\leqslant 1.
	\end{equation}
	According to Azuma's inequality for supermartingales \cite{A1967}, we have
	\begin{equation}
		\Pr\left(C_n\geqslant n\delta\,\middle|\,\gamma_C\right) = 	\Pr\left(\frac{S_n}{n}\geqslant p_*+\delta\,\middle|\,\gamma_C \right) \leqslant \exp\left(-\frac{n\delta^2}{2}\right)
	\end{equation}
	for all $n$ and $\delta > 0$.
	Furthermore, if Eq. \eqref{decision rule} holds, we guess the initial memory state is $\mu_C$, and this is a wrong guess.
	Thus, when the initial memory state is $\gamma_C$, the probability of guessing correctly is
	\begin{equation}\label{probability gamma}
		P_{\mathrm{corr}}(\gamma_C) = 1-\Pr\left(C_n\geqslant n\delta\,\middle|\,\gamma_C\right) \geqslant 1-\exp\left(-\frac{n\delta^2}{2}\right).
	\end{equation}
	
	Combining Eqs.~\eqref{probability mu} and \eqref{probability gamma}, the total success probability for distinguishing $\mu_C$ from $\gamma_C$ satisfies
	\begin{equation}
		\begin{aligned}
			P_{\mathrm{corr}}(\mu_C,\gamma_C) &= \frac{1}{2}P_{\mathrm{corr}}(\mu_C) + \frac{1}{2}P_{\mathrm{corr}}(\gamma_C) \\ &\geqslant \frac{1}{2}(1-\varepsilon) + \frac{1}{2} \left[ 1-\exp\left(-\frac{n\delta^2}{2}\right)\right].
		\end{aligned}
	\end{equation}
	Since $\varepsilon>0$ can be chosen arbitrarily small and $n$ can be chosen arbitrarily large, we can get $P_{\mathrm{corr}}(\mu_C,\gamma_C) \rightarrow 1$ when $\varepsilon \rightarrow 0$ and $n \rightarrow + \infty$. 
	This contradicts the fact that two nonorthogonal quantum states cannot be distinguished with unit success probability.
	Therefore, we can get that $R_m\left(\{p_i,\rho_i\}_{i=0}^{1}\right) \leqslant p_{\mathrm{succ}}^{\mathbf{PWF}}\left(\{p_i,\rho_i\}_{i=0}^{1}\right)$.
	
	Combining $R_m\left(\{p_i,\rho_i\}_{i=0}^{1}\right) \geqslant p_{\mathrm{succ}}^{\mathbf{PWF}}\left(\{p_i,\rho_i\}_{i=0}^{1}\right)$ with $R_m\left(\{p_i,\rho_i\}_{i=0}^{1}\right) \leqslant p_{\mathrm{succ}}^{\mathbf{PWF}}\left(\{p_i,\rho_i\}_{i=0}^{1}\right)$, we have
	\begin{equation}
		R_m\left(\{p_i,\rho_i\}_{i=0}^{1}\right) = p_{\mathrm{succ}}^{\mathbf{PWF}}\left(\{p_i,\rho_i\}_{i=0}^{1}\right).
	\end{equation}
	This completes the proof.
\end{proof}

So far, we have proved that quantum memories cannot improve the discrimination power of CSMs when distinguishing the pair of PWF states.
Since catalyst-assisted discrimination is a special case of memory-assisted discrimination, in which the memory state remains unchanged after each round, we immediately obtain the following corollary.

\begin{corollary}
	Let $\rho_0,\rho_1$ be two PWF states with prior probabilities $p_0,p_1>0$ and $p_0+p_1=1$. 
	Then we have
	\begin{equation}
		R_c\left(\{p_i,\rho_i\}_{i=0}^{1}\right) = p_{\mathrm{succ}}^{\mathbf{PWF}}\left(\{p_i,\rho_i\}_{i=0}^{1}\right),
	\end{equation}
	i.e., quantum catalysts cannot improve the discrimination power of CSMs for distinguishing the pair of PWF states.
\end{corollary}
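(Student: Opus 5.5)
The corollary follows from Theorem \ref{theorem 3} by sandwiching $R_c$ between two quantities that both equal $p_{\mathrm{succ}}^{\mathbf{PWF}}$. I will show
\begin{equation*}
	p_{\mathrm{succ}}^{\mathbf{PWF}}\left(\{p_i,\rho_i\}_{i=0}^{1}\right)\leqslant R_c\left(\{p_i,\rho_i\}_{i=0}^{1}\right)\leqslant R_m\left(\{p_i,\rho_i\}_{i=0}^{1}\right),
\end{equation*}
and then invoke Theorem \ref{theorem 3}.

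For the lower bound, take any finite-dimensional PWF catalyst, for example the maximally mixed state $\mathds{1}/d$. In every round, apply $\mathbf{E}^A\otimes \mathcal{I}_C$, where $\mathbf{E}^A$ is an optimal PWF POVM for the ensemble. This instrument is CPWP and returns the catalyst unchanged. The indicators $X_j$ are then i.i.d. Bernoulli random variables with mean $p_*$. By the weak law of large numbers, every rate $r<p_*$ is achievable, so $R_c\geqslant p_*$.

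For the upper bound, I use the remark made before the corollary: a catalytic protocol is a special memory-assisted protocol. Concretely, take an initial memory $\mu_C$ and a fixed CPWP instrument that always returns $\mu_C$. This is a (non-adaptive) sequence of CPWP instruments acting on a finite-dimensional memory. Hence every rate achievable catalytically is also achievable with memory, and $R_c\leqslant R_m$.

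Combining the two bounds with Theorem \ref{theorem 3}, which gives $R_m=p_*$, yields $R_c=p_*$.

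The only delicate point is the inclusion of catalytic protocols into the memory framework. The proof of Theorem \ref{theorem 3} itself replaces $\mu_C$ by $\gamma_C$, and in that hypothetical run the catalyst need not be returned. This causes no problem: that argument works with arbitrary CPWP instruments, whose catalytic property on $\mu_C$ is never used. So no additional argument is needed beyond stating the inclusion explicitly.
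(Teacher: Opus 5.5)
Your proposal is correct and matches the paper's proof: the lower bound $R_c\geqslant p_{\mathrm{succ}}^{\mathbf{PWF}}$ comes from ignoring the catalyst and applying an optimal PWF POVM in each round, and the upper bound comes from treating catalytic protocols as special memory-assisted protocols, which gives $R_c\leqslant R_m$, combined with Theorem~\ref{theorem 3}. Your weak-law-of-large-numbers justification of the lower bound, and your remark that the proof of Theorem~\ref{theorem 3} uses only the CPWP property of the instruments and never the return of $\mu_C$, are sound elaborations of steps the paper treats as immediate.
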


\begin{proof}
	First, we prove $R_c\left(\{p_i,\rho_i\}_{i=0}^{1}\right) \geqslant p_{\mathrm{succ}}^{\mathbf{PWF}}\left(\{p_i,\rho_i\}_{i=0}^{1}\right)$.
	This inequality is obvious, since the catalysts can always be ignored and one can directly perform an optimal PWF POVM. 
	Therefore, the optimal catalyst-assisted rate cannot be lower than the success probability achieved by PWF measurements.
	
	Subsequently, we prove $R_c\left(\{p_i,\rho_i\}_{i=0}^{1}\right) \leqslant p_{\mathrm{succ}}^{\mathbf{PWF}}\left(\{p_i,\rho_i\}_{i=0}^{1}\right)$.
	Since the catalyst state can be regarded as a memory state that is returned exactly to its initial state and remains uncorrelated with the output after each round, we have
	\begin{equation}
		R_c\left(\{p_i,\rho_i\}_{i=0}^{1}\right) \leqslant R_m\left(\{p_i,\rho_i\}_{i=0}^{1}\right).
	\end{equation}
	According to Theorem \ref{theorem 3}, we have
	\begin{equation}
		R_m\left(\{p_i,\rho_i\}_{i=0}^{1}\right) = p_{\mathrm{succ}}^{\mathbf{PWF}}\left(\{p_i,\rho_i\}_{i=0}^{1}\right).
	\end{equation}
	Therefore, we can get $R_c\left(\{p_i,\rho_i\}_{i=0}^{1}\right) \leqslant p_{\mathrm{succ}}^{\mathbf{PWF}}\left(\{p_i,\rho_i\}_{i=0}^{1}\right)$.

	Combining $R_c\left(\{p_i,\rho_i\}_{i=0}^{1}\right) \geqslant p_{\mathrm{succ}}^{\mathbf{PWF}}\left(\{p_i,\rho_i\}_{i=0}^{1}\right)$ with $R_c\left(\{p_i,\rho_i\}_{i=0}^{1}\right) \leqslant p_{\mathrm{succ}}^{\mathbf{PWF}}\left(\{p_i,\rho_i\}_{i=0}^{1}\right)$, we have
	\begin{equation}
		R_c\left(\{p_i,\rho_i\}_{i=0}^{1}\right) = p_{\mathrm{succ}}^{\mathbf{PWF}}\left(\{p_i,\rho_i\}_{i=0}^{1}\right).
	\end{equation}
	This completes the proof.
\end{proof}

So far, we have proved that neither quantum memories nor quantum catalysts can improve the discrimination power of CSMs when discriminating a pair of PWF states. 
Meanwhile, Refs. \cite{SLN+2022,PS2025,YDY2012} have shown that quantum catalysts and quantum memories can enhance the success probability of local state discrimination when discriminating a pair of entangled states in the entanglement resource theory. 
However, we are unable to find a state that is arbitrarily close to a PWF state and can still provide a fixed amount of magic resources in each round to enhance the success probability of discrimination in the magic resource theory. 
Therefore, we cannot prove whether quantum catalysts and quantum memories can enhance the success probability of CSM discrimination for a pair of non-PWF states. 
Nevertheless, we conjecture that this is true, and leave this question for future research.

\section{Conclusions and discussions}
In this paper, we study how to improve the discrimination power of CSMs within the framework of the magic resource theory. 
Specifically, we consider three possible approaches to improving their discrimination power, including adding magic resources to restricted measurements, using quantum catalysts, and using quantum memories.
First, we consider QSD under restricted measurements assisted by magic resources. 
We establish a connection between PWF POVMs and CPWP measurement channels, thereby relating the problem of implementing global measurements to a channel simulation problem under CPWP channels. 
Next, we derive the lower and upper bounds on the exact channel simulation cost. 
Furthermore, we give an example showing that the lower and upper bounds can coincide for certain magic measurement channels and can exactly determine the amount of magic required for their simulation.
We also prove that consumable magic can enhance the discrimination power of CSMs.
Subsequently, we formulate the magic-assisted binary QSD problem as an SDP and derive its dual problem.
Furthermore, we investigate whether quantum catalysts and quantum memories can improve the discrimination power of CSMs. 
We prove that, for a pair of PWF states, neither finite-dimensional quantum memories nor quantum catalysts can improve the optimal success probability of discrimination using CSMs. 
This no-go theorem shows that, although magic resources can improve the discrimination power of restricted measurements, neither quantum memories nor quantum catalysts can improve the discrimination power of CSMs when discriminating a pair of PWF states under CPWP protocols. 
This no-go theorem reveals a clear contrast between consumable and reusable quantum resources in the present setting: consumable magic resources can be used to simulate measurements beyond CSMs, whereas finite-dimensional memories and catalysts do not provide an advantage for discriminating PWF state pairs under CPWP protocols.

Several problems remain worthy of further investigation. 
First, although we have shown that the channel simulation bounds can be tight for a specific non-CPWP measurement channel, it remains open to determine tighter and more computable bounds for general target channels.
Recently, the stabilizer R\'enyi entropy has attracted considerable attention due to its advantage of easy-to-compute \cite{LOH2022,WL2023}. 
Can it be used to derive tighter upper and lower bounds on the simulation cost?
Second, it would be interesting to further establish an SDP hierarchy for the optimal magic cost of magic-assisted discrimination, similar to the method used for entanglement-assisted discrimination in Ref. \cite{ZZL+2025}. 
Subsequently, the no-go theorem in this paper applies to the discrimination of a pair of PWF states using finite-dimensional memories or catalysts. 
Whether catalysts or memories can improve the discrimination power for non-PWF states, or provide an advantage in more general state discrimination tasks, remains an interesting problem. 
Finally, investigating whether magic activators \cite{C2011} can improve the discrimination power of CSMs is also a worthwhile direction for future research.

\section*{ACKNOWLEDGMENTS}
This paper was supported by National Natural Science Foundation of China (Grant Nos.12471437,12201484,12071271), and Shaanxi Fundamental Science Research Project for Mathematics and Physics (Grant No.23JSZ011).

\appendix
\section*{APPENDIX A}
According to the primal problem
\begin{equation}
	\begin{aligned}
		&\max_{E} \quad &&\frac{1}{2}+\frac{1}{2}\mathrm{Tr}\{E[(\rho_0 - \rho_1)\otimes \omega^{\otimes n}]\}, \\ &\mathrm{subject~to} \quad && 0\leqslant E\leqslant\mathds{1},  \\ &&& 0\leqslant \mathrm{Tr}(EA_{\mathbf{u}})\leqslant 1 \quad \forall u,
	\end{aligned}
\end{equation}
where $E$ is a PWF POVM, $\omega$ is a magic state, the constraints follow from the properties of PWF POVM.

Next, we introduce the dual variables $Y\geqslant 0$ for the constraint \(\mathds{1}-E\geqslant 0\), and nonnegative real numbers $\alpha_{\mathbf u}\geqslant 0, \beta_{\mathbf u}\geqslant 0$ for the constraints$ \mathrm{Tr}(EA_{\mathbf u})\geqslant 0, 1-\mathrm{Tr}(EA_{\mathbf u})\geqslant 0$,respectively. 
Then the Lagrange function is
\begin{equation}
	\begin{aligned}
		L(E;Y,\alpha_{\mathbf u},\beta_{\mathbf u}) =&\frac12+\mathrm{Tr}(\frac{1}{2}\left[(\rho_0-\rho_1)\otimes\omega^{\otimes n}\right]E) +\mathrm{Tr}\!\left[Y(\mathds{1}-E)\right] \\ &+\sum_{\mathbf u}\alpha_{\mathbf u} \mathrm{Tr}(EA_{\mathbf u}) +\sum_{\mathbf u}\beta_{\mathbf u} \left[ 1-\mathrm{Tr}(EA_{\mathbf u}) \right].
	\end{aligned}
\end{equation}

We can write the above Lagrange function as
\begin{equation}
	\begin{aligned}
		L(&E;Y,\alpha_{\mathbf u},\beta_{\mathbf u}) = \frac12+\mathrm{Tr}(Y) +\sum_{\mathbf u}\beta_{\mathbf u} \\ &+ \mathrm{Tr}\left[ E\left(\frac{1}{2}\left[(\rho_0-\rho_1)\otimes\omega^{\otimes n}\right]-Y+\sum_{\mathbf u}(\alpha_{\mathbf u}-\beta_{\mathbf u})A_{\mathbf u}\right) \right].
	\end{aligned}
\end{equation}

Since the maximization is over $E\geqslant 0$, the supremum of the
Lagrange function is finite if and only if
\begin{equation}
	Y+\sum_{\mathbf u}
	(\beta_{\mathbf u}-\alpha_{\mathbf u})A_{\mathbf u}
	\geqslant \frac{1}{2}\left[(\rho_0-\rho_1)\otimes\omega^{\otimes n}\right].
\end{equation}

Therefore, the dual problem is
\begin{equation}
	\begin{aligned}
		\min_{Y,\alpha_{\mathbf u},\beta_{\mathbf u}} \quad & \frac12+\mathrm{Tr}(Y) +\sum_{\mathbf u}\beta_{\mathbf u} \\ \mathrm{subject\ to}\quad & Y+\sum_{\mathbf u} (\beta_{\mathbf u}-\alpha_{\mathbf u})A_{\mathbf u} \geqslant \frac12\big((\rho_0-\rho_1)\otimes\omega^{\otimes n}\big),\\ & Y\geqslant 0,\\ & \alpha_{\mathbf u}\geqslant 0,\quad \beta_{\mathbf u}\geqslant 0, \qquad \forall \mathbf u .
	\end{aligned}
\end{equation}
Since the primal problem satisfies Slater's condition, for example by choosing $E = \mathds{1}/2$, strong duality holds.

\end{document}